\documentclass[a4paper,UKenglish,cleveref, autoref, thm-restate]{lipics-v2021}
\usepackage{amsfonts, amsmath, amssymb, amsthm, algorithm,algpseudocode, enumitem, xspace, graphicx, float}

\PassOptionsToPackage{obeyspaces}{url}
\DeclareMathOperator{\mc}{mincut}
\DeclareMathOperator{\cl}{cl}

\newtheorem{problem}[theorem]{Problem}
\newtheorem*{lemma*}{Lemma}
\newtheorem*{thm:skew}{Theorem \ref{thm:skew}}

\newcommand*{\MA}{\mathcal{A}}
\newcommand*{\MB}{\mathcal{B}}
\newcommand*{\MF}{\mathcal{F}} 
\newcommand*{\FF}{\mathbb{F}}

\newcommand*{\ol}{\overline}   
\newcommand*{\sm}{\setminus}                           
\newcommand*{\abs}[1]{\lvert #1\rvert}

\newcommand{\Bol}{Bollob\'as}
\newcommand{\KW}{Kratsch and Wahlstr\"om\xspace}

\hideLIPIcs
\nolinenumbers

\title{Near-linear-time, Optimal Vertex Cut Sparsifiers in Directed Acyclic Graphs}

\author{Zhiyang He}{Computer Science Department, Carnegie Mellon University, Pittsburgh, United States}{}{}{}
\author{Jason Li}{Computer Science Department, Carnegie Mellon University, Pittsburgh, United States}{}{}{Supported in part by NSF award CCF-1907820.}
\author{Magnus Wahlstr\"om}{Royal Holloway, University of London, UK}{}{}{}



\authorrunning{Z.\,He and J.\,Li and M.\,Wahlstr\"om} 

\Copyright{Zhiyang He and Jason Li and Magnus Wahlstr\"om} 

\begin{CCSXML}
<ccs2012>
<concept>
<concept_id>10003752.10003809.10003635.10010036</concept_id>
<concept_desc>Theory of computation~Sparsification and spanners</concept_desc>
<concept_significance>500</concept_significance>
</concept>
<concept>
<concept_id>10003752.10003809.10010052.10010053</concept_id>
<concept_desc>Theory of computation~Fixed parameter tractability</concept_desc>
<concept_significance>500</concept_significance>
</concept>
<concept>
<concept_id>10002950.10003624.10003633.10003647</concept_id>
<concept_desc>Mathematics of computing~Matroids and greedoids</concept_desc>
<concept_significance>500</concept_significance>
</concept>
</ccs2012>
\end{CCSXML}

\ccsdesc[500]{Theory of computation~Sparsification and spanners}
\ccsdesc[500]{Theory of computation~Fixed parameter tractability}
\ccsdesc[500]{Mathematics of computing~Matroids and greedoids}


\keywords{graph theory, vertex sparsifier, representative family, matroid} 

\category{} 

\relatedversion{} 






\EventEditors{Petra Mutzel, Rasmus Pagh, and Grzegorz Herman}
\EventNoEds{3}
\EventLongTitle{29th Annual European Symposium on Algorithms (ESA 2021)}
\EventShortTitle{ESA 2021}
\EventAcronym{ESA}
\EventYear{2021}
\EventDate{September 6--8, 2021}
\EventLocation{Lisbon, Portugal}
\EventLogo{}
\SeriesVolume{204}
\ArticleNo{52}

\begin{document}

\maketitle
\begin{abstract}
Let $G$ be a graph and $S, T \subseteq V(G)$ be (possibly overlapping) sets of terminals, $|S|=|T|=k$.  We are interested in computing a \emph{vertex sparsifier} for terminal cuts in $G$, i.e., a graph $H$ on a smallest possible number of vertices, where $S \cup T \subseteq V(H)$ and such that for every $A \subseteq S$ and $B \subseteq T$ the size of a minimum $(A,B)$-vertex cut is the same in $G$ as in $H$.
We assume that our graphs are unweighted and that terminals may be part of the min-cut.
In previous work, Kratsch and Wahlstr\"om (FOCS 2012/JACM 2020) used connections to matroid theory to show that a vertex sparsifier $H$ with $O(k^3)$ vertices can be computed in randomized polynomial time, even for arbitrary digraphs $G$. However, since then, no improvements on the size $O(k^3)$ have been shown.

In this paper, we draw inspiration from the renowned \emph{\Bol's Two-Families Theorem} in extremal combinatorics and introduce the use of total orderings into \KW's methods. This new perspective allows us to construct a sparsifier $H$ of $\Theta(k^2)$ vertices for the case that $G$ is a DAG.  We also show how to compute $H$ in time near-linear in the size of $G$, improving on the previous $O(n^{\omega+1})$. Furthermore, $H$ recovers the \emph{closest} min-cut in $G$ for every partition $(A,B)$, which was not previously known.  Finally, we show that a sparsifier of size $\Omega(k^2)$ is required, both for DAGs and for undirected edge cuts.  
\end{abstract}

\newpage

\section{Introduction}\label{sec:intro}

Let $G=(V,E)$ be an unweighted, directed graph, and let $S,T\subset V$ be sets of terminals, not necessarily disjoint. In the vertex sparsifier problem, our goal is to construct a smaller graph $H$, called a \emph{vertex sparsifier}, that preserves the cut structure of $S,T$ in $G$. More precisely, $H$ includes all vertices in $S,T$, and for all subsets $A\subseteq S$ and $B\subseteq T$, the size of the minimum vertex cut separating $A$ and $B$ is the same in $G$ and $H$.
Here, we allow the min-cut to contain vertices from $A$ and $B$;
for other notions of cut sparsifiers from the literature, see related work, below. 

A result of \KW proved the first bound on the size of a vertex sparsifier that is polynomial in the number of terminals. When $S,T$ have size $k$, the vertex sparsifier has $O(k^3)$ vertices. \KW's main insight is to phrase the problem in terms of constructing representative families on a certain matroid, after which they can appeal to the rich theory on representative families~\cite{monien1985find,marx2009}. Their result, also known as the \emph{cut-covering lemma} in the areas of fixed-parameter tractability and kernelization, has led to many new algorithmic developments~\cite{KratschW20,kratsch2014compression,hols2018randomized,kratsch2018randomized}. Nevertheless, despite the recent surge in applications of the cut-covering lemma, the original bound of $O(k^3)$ has yet to be improved.

In this paper, we introduce the ordered version of the representative family method, and use it to give a sparsifier on $O(k^2)$ vertices in directed acyclic graphs.  This matches lower bounds of $\Omega(k^2)$, which we present in Section~\ref{sec:lower}.  
Furthermore, unlike \KW's result, our new algorithm runs in near-linear time in the size of the graph, and preserves all \emph{closest} min-cuts between subsets of the terminals.  In fact, the latter is an important ingredient in the improved running time; see discussion below.  
We expect that covering closest min-cuts may lead to  further applications in the theory of kernelization. The central method we use is the following theorem.

\begin{theorem}
\label{thm:skew}
Suppose $\MF$ is a family of subsets of $\FF^d$ for some field $\FF$ and for all $B\in \MF$, $\abs{B} = s$. Let 
\[
\MA = \{A\subseteq \FF^d\mid \abs{A} \le r \text{ and } \exists B\in \MF \text{ s.t.\ } A\uplus B\text{ is linearly independent}\} .
\]
Fix any ordering $\sigma$ of $\MF$, namely $\MF = \{B_1, B_2, \ldots, B_n\}$, and suppose $d\ge r+s$. Then there exists $\MB\subseteq \MF$, $\MB = \{B_{i_1}, B_{i_2}, \ldots, B_{i_m}\}$ where $i_1<i_2<\ldots<i_m$, such that 
\begin{enumerate}[label = (\alph*)]
\item \label{condition:order} For all $A\in \MA$, there exists $B_{i_k}\in \MB$ where $A\uplus B_{i_k}$ is independent and for all $j\in [n], j > i_k$, $A\uplus B_j$ is dependent. Note that $B_j$ is not necessarily in $\MB$. Here $\uplus$ denotes disjoint union, which particularly implies that if $A$ and $B$ are not disjoint, then $A\uplus B$ is a multi-set and therefore dependent.
\item $m\le \binom{d}{s}$, and we can find $\MB$ algorithmically using $O(\binom{d}{s}ns^\omega + \binom{d}{s}^{\omega-1}n)$ field operations over $\FF$ (in particular, in a number of operations linear in $|\mathcal F|$).
\end{enumerate}
\end{theorem}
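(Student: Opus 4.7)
The plan is to reduce the combinatorial ordering condition to linear algebra inside the exterior power $\bigwedge^s \FF^d$. Map each $B = \{b_1, \ldots, b_s\} \in \MF$ to its Pl\"ucker vector $v_B := b_1 \wedge \cdots \wedge b_s$; the target space has dimension exactly $\binom{d}{s}$, and because $d \ge r+s$, the classical identity ``$A \uplus B$ is independent iff $v_A \wedge v_B \ne 0$ in $\bigwedge^{r+s}\FF^d$'' is meaningful for every $A$ with $|A| \le r$. This converts the desired ``last-extender'' property into a statement about linear spans of the $v_{B_i}$.

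The algorithm is then a greedy basis extraction carried out in reverse order. Initialize $\MB := \emptyset$ and $W := \{0\} \subseteq \bigwedge^s \FF^d$; for $i = n, n-1, \ldots, 1$, compute $v_{B_i}$, and add $B_i$ to $\MB$ (updating $W$ accordingly) precisely when $v_{B_i} \notin W$. By construction $\{v_B : B \in \MB\}$ is linearly independent, yielding the size bound $|\MB| \le \binom{d}{s}$. To verify property (a), fix $A \in \MA$ and let $i^*$ be the largest index with $A \uplus B_{i^*}$ independent. If $B_{i^*} \notin \MB$, then at the moment $i^*$ is processed one has $v_{B_{i^*}} \in \mathrm{span}\{v_{B_j} : j > i^*,\ B_j \in \MB\}$. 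However, the maximality of $i^*$ forces $v_A \wedge v_{B_j} = 0$ for every $j > i^*$ in $[n]$ (not merely those in $\MB$), so the linear map $x \mapsto v_A \wedge x$ annihilates that entire span, and in particular $v_A \wedge v_{B_{i^*}} = 0$ --- contradicting the definition of $i^*$. Hence $B_{i^*} \in \MB$ and it witnesses condition \ref{condition:order}.

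For the running time, each wedge vector has $\binom{d}{s}$ coordinates, each of which is a signed $s \times s$ minor of $B_i$; assembling all $v_{B_i}$ therefore costs $O\!\left(n \binom{d}{s} s^\omega\right)$, matching the first term. The selection loop is then equivalent to extracting a row basis of the $n \times \binom{d}{s}$ matrix whose rows are the $v_{B_i}$, read bottom-to-top, which can be executed in $O\!\left(n \binom{d}{s}^{\omega-1}\right)$ field operations using block Gaussian elimination built on fast matrix multiplication. I anticipate the main technical step to be this last one: packaging the reverse-sweep membership queries so that each amortizes to $\binom{d}{s}^{\omega-1}$ rather than the naive $\binom{d}{s}^2$ of carrying an explicit RREF; the rest of the argument is direct manipulation of the exterior algebra combined with the reverse-greedy selection scheme.
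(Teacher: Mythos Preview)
Your proposal is correct and is essentially the approach the paper defers to. The paper does not present its own proof of Theorem~\ref{thm:skew}; instead it observes that the ordered statement is equivalent to the weighted representative-families theorem of Fomin et al.\ (via the assignment $w(B_j)=j$) and points the reader there. The Fomin et al.\ argument is precisely the exterior-algebra route you describe: lift each $B_i$ to its wedge vector $v_{B_i}\in\bigwedge^{s}\FF^d$, use that $A\uplus B$ is independent iff $v_A\wedge v_B\neq 0$, and run a greedy linear-independence sweep over the $v_{B_i}$ in the appropriate order; the bound $\binom{d}{s}$ is the dimension of $\bigwedge^{s}\FF^d$, and the stated running time is obtained by computing the wedge coordinates as $s\times s$ minors and then extracting a row basis of the resulting $n\times\binom{d}{s}$ matrix with fast matrix multiplication. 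Your reverse-order sweep and the ``last-extender'' contradiction via linearity of $x\mapsto v_A\wedge x$ are exactly the right ingredients.
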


The condition that $A \uplus B_j$ is required to be dependent only for $j>i_k$, as opposed to $A \uplus B_j$ being dependent for every $j \neq i_k$, recalls the \emph{skew} version of \Bol's two-families theorem, proven by Frankl~\cite{Frankl82}.  

Technically, this version is equivalent to the weighted version of the representative family method shown by Fomin et al.~\cite{fomin2016efficient}.  In that version, every element $X \in \MF$ has a weight $w(X)$, and condition \ref{condition:order} is replaced by the condition that $w(B_{i_k})$ is maximal among all sets $B_j$ such that $A \uplus B_j$ is independent. Indeed, given $\sigma$ we can simply use $w(B_j)=j$, and conversely any input where all the weights are distinct enforces a corresponding total ordering on the elements\footnote{In the first version of this paper, we presented a proof of Theorem~\ref{thm:skew} that runs in polynomial time. It was later pointed out to us that this theorem is equivalent to Theorem~3.7 in \cite{fomin2016efficient}, which runs in near-linear time. We thereby refer the readers to the proof in \cite{fomin2016efficient}, as our proof shares similar underlying ideas with theirs.}.  
However, the difference in focus between weights and an ordering is significant, as a total element ordering can carry semantic meaning that is obscured when implemented using weights.

Applying Theorem~\ref{thm:skew} to vertex cut sparsifiers, we obtain the main result of this paper.

\begin{theorem}\label{thm:main}
Given a directed acyclic graph $G = (V, E)$ with terminal sets $S, T$ of size $k$, we can find a vertex cut sparsifier of $G$ of size $\Theta(k^2)$ algorithmically in time $\tilde{O}((m+n)k^{O(1)})$.
\end{theorem}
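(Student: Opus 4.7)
The proof will apply Theorem~\ref{thm:skew} to a matroid representation of the DAG's cut structure. Since the target bound is $O(k^2) = \binom{O(k)}{2}$, this points to $s=2$ and $d=O(k)$: the plan is to associate to each internal vertex $v$ a pair $B_v\subseteq \FF^{O(k)}$ of vectors, built from the gammoid/transversal-matroid machinery of \KW. After vertex-splitting (replacing each internal $v$ with $v^-\to v^+$ to convert vertex cuts to edge cuts), the two vectors in $B_v$ would encode the ``in-side'' and ``out-side'' of $v$; the encoding is chosen so that, together with vectors encoding $(A,B)$, a set of vertices $C$ forms a minimum $(A,B)$-cut precisely when the associated multiset of vectors is linearly independent in $\FF^d$ with $d=O(k)$. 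The family $\MA$ of Theorem~\ref{thm:skew} then captures ``$(A,B)$ together with a partial cut extendable through $v$.''

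The ordering $\sigma$ on $\MF=\{B_v\}$ is given by a topological ordering of $V$. Its role is to make condition~\ref{condition:order} of Theorem~\ref{thm:skew} encode \emph{closest min-cut}: for each extension $A$, the theorem retains the latest $B_{i_k}$ whose union with $A$ is independent, which should translate to the latest vertex appearing in the $A$-closest min-cut. Applying the theorem yields a vertex set $Z$ of size $\binom{d}{s}=O(k^2)$; the sparsifier $H$ is then built on $S\cup T\cup Z$ by routing the appropriate flows between selected vertices. Correctness would be proved by induction on the closest min-cut $C$: the topologically-latest vertex of $C$ is guaranteed by condition~\ref{condition:order} to have a replacement in $Z$, the matroid correspondence ensures that the replacement is itself part of a min-cut of the same value, and iterating produces a closest min-cut lying entirely inside $Z$. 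The matching $\Omega(k^2)$ lower bound is advertised separately, so I would not need to revisit it here.

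For the running time, the local vector representations $B_v$ can be computed in $\tilde O(m+n)$ time using near-linear-time max-flow subroutines, since the encoding depends only on local flow behavior at each vertex. Theorem~\ref{thm:skew} is then invoked in near-linear time in $|\MF|=O(n)$ with $\mathrm{poly}(k)$ overhead, and reconstructing the edges of $H$ between vertices of $S\cup T\cup Z$ requires additional $\tilde O(m+n)$-time flow calls, for a total of $\tilde O((m+n)k^{O(1)})$. The main obstacle I anticipate lies in the correctness argument: ensuring the matroid encoding is tight enough that Theorem~\ref{thm:skew}'s skew condition does not merely select \emph{some} representative min-cut but exactly the $A$-closest one. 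This requires the DAG's topological order to translate precisely into the linear-algebraic ``latest-first'' order inside the matroid, a correspondence which is not automatic and likely demands a carefully oriented gammoid construction in which later vertices dominate earlier ones in the extension relation.
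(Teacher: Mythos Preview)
Your high-level plan matches the paper's: apply Theorem~\ref{thm:skew} with $s=2$ and $d=O(k)$ to pairs $B_v$ coming from a gammoid-based encoding, ordered topologically. But two concrete pieces are missing.

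First, the correctness mechanism. You propose an induction: capture the topologically latest vertex of the closest min-cut $C$, replace it, and iterate until the cut lies in $Z$. This is not how the paper proceeds, and it is not clear your induction terminates inside a \emph{fixed} $Z$ of size $O(k^2)$ --- iterating the representative-set computation is precisely what \KW did, and it is what cost them the extra factor of $k$. The paper instead shows that \emph{every} vertex $v\in C$ is captured in a single application, by designing a separate query $A_{(C,v)}$ for each $v$. Concretely, in the disjoint union of a gammoid on $(G,S)$ and a gammoid on $(G_R',T)$ (where $G_R'$ adds a sink-only copy $\overline{v}'$ of each vertex to the reversed graph), one sets
\[
A_{(C,v)} = \bigl[(S\setminus X)\cup (C\setminus\{v\})\bigr]\ \cup\ \bigl[(\overline{T\setminus Y})\cup \overline{C}\bigr],
\qquad B_v=\{v,\overline{v}'\}.
\]
The point of the \emph{reverse} topological order (not forward, as you wrote) is that for this particular $A_{(C,v)}$, the set $B_v$ is not merely \emph{an} answer but the \emph{unique last} answer: any $u$ later than $v$ in reverse order makes $A_{(C,v)}\uplus B_u$ dependent, by a four-case analysis ($u$ off every $X$--$Y$ path; $u\in L(C)$; $u\in R(C)$; $u\in C$). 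The $R(C)$ case is where acyclicity and the ordering are actually used, and the asymmetry between the two gammoids (one on $G$, one on $G_R'$ with the sink-only copy) is what lets the query omit $v$ on one side but not the other. This per-vertex uniqueness is what eliminates the iteration. Your sketch gestures at ``a partial cut extendable through $v$,'' which is the right shape for $A_{(C,v)}$, but then reverts to an inductive replacement argument that is neither needed nor obviously sound.

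Second, the running time. You write that the vectors can be computed via ``near-linear-time max-flow subroutines.'' Max-flow does not produce a linear representation of a gammoid; the standard route through transversal matroids requires inverting an $n\times n$ matrix. The paper's near-linear bound relies on a separate, DAG-specific ingredient (Lemma~\ref{lem:quick-repr}): Mason's path-polynomial representation, in which the column for $v$ is $(P(s,v))_{s\in S}$ with $P(s,v)=\sum_{u\in N^-(v)}P(s,u)\,x_{uv}$, evaluated over a sufficiently large finite field in topological order using $O(k)$ field operations per edge. Without this, your time bound does not follow. Likewise, building $H$ needs no flow calls: a DFS from each $u\in P$ in the graph $G$ with out-edges of $P\setminus\{u\}$ suppressed recovers exactly the edges of the iterated closure, in $O(|P|\cdot m)=O(k^2 m)$ total time.
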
 

Our proof initially follows that of \KW~\cite{KratschW20}.  Like \KW, we compute a
``cut-covering set'', i.e., a set $Z \subseteq V(G)$ such that for
every $A \subseteq S$ and $B \subseteq T$ there is an $(A,B)$-min cut
$C$ with $C \subseteq Z$.  However, \KW's result is based around
\emph{essential vertices}, i.e., vertices $v \in V(G)$ that must be
included in any cut-covering set.  Using the unordered version of
Theorem~\ref{thm:skew} (see Lemma~1.1 of~\cite{kratsch2012representative}), they compute a set $X$ of $O(k^3)$ vertices which
is guaranteed to include all essential vertices in $G$.  They then eliminate one vertex not in $X$, recompute the representative family, and repeat until exhaustion.  This gives a superlinear running
time and a final bound of $|Z|=O(k^3)$.
This iterative process for computing $Z$ was required because the initial set $X$
could not be guaranteed to contain \emph{all} vertices of any $(A,B)$-min cut $C$,
unless that min-cut happens to be unique (i.e., consist only of essential vertices). 

We use Theorem~\ref{thm:skew} to improve over this in two ways.
By using the additional power of an ordering, we reduce the bound from
$|Z|=O(k^3)$ to $|Z|=O(k^2)$ when $G$ is a DAG.  Furthermore, instead
of covering some arbitrary min-cut for every pair $(A,B)$, we observe
that our construction guarantees that $Z$ contains the unique
\emph{closest} $(A,B)$-min cut to $A$, i.e., that mincut $C$ for which
the set of vertices reachable from $A$ is minimized.  This is an
interesting consequence that was not previously known, but
additionally, it allows us to significantly improve the running time required to
compute a cut-covering set.  Since the output
of Theorem~\ref{thm:skew} contains all vertices of the closest $(A,B)$-min cut $C$ for every $(A,B)$,
we can compute a cut-covering set $Z$ with a
single application of Theorem~\ref{thm:skew}, eliminating the iterative
nature of~\cite{KratschW20} and improving the running time of the
procedure. 


Finally, to achieve a linear running time, one more obstacle must be 
overcome.  In order to apply Theorem~\ref{thm:skew}, we need to compute a representation for the matroids underlying the result, known as \emph{gammoids}, in time linear in $n+m$.  The usual method for representing gammoids goes via the class of \emph{transversal matroids}, however, this requires taking the inverse of an $n \times n$ matrix.  Luckily, we observe that an older construction of Mason~\cite{mason72gammoids} can be used to represent gammoids more efficiently over DAGs; see Lemma~\ref{lem:quick-repr}.
This allows us to compute $Z$ in time linear in $n+m$, and computing the final sparsifier $H$ is then a simple task.

We also show that there are graphs $G$ with $k$ terminals such that any cut sparsifier $H$ requires $\Omega(k^2)$ vertices, both when $H$ is a directed vertex cut sparsifier for a DAG $G$, and when $H$ is an undirected edge cut sparsifier for an undirected graph $G$. 

\subparagraph*{Related work.} 
Several different notions of cut sparsifiers have been considered, as well as vertex sparsifiers for other problems.
Vertex cut sparsifiers were first introduced by Moitra~\cite{moitra2009approximation} in the setting of approximation algorithms; see also~\cite{makarychev2010metric,charikar2010vertex,chuzhoy2012vertex}. Recently, they have found applications in fast graph algorithms, especially in the dynamic setting~\cite{durfee2019fully,chalermsook2020vertex,chen2020fast}. 
Compared to our setting, many of these results replicate
min-cuts only approximately, rather than exactly, and most apply only
to undirected edge cuts.  On the other hand, in the general case
(e.g., when working with edge cuts or when terminals are not deletable),
there is an important distinction between parameterizing by the
\emph{number of terminals} and the \emph{total terminal capacity}.
Our setting, with deletable terminals, corresponds to parameterizing
edge cuts by the total capacity of the terminal set.  Previous results
for this setting include \KW~\cite{KratschW20} discussed above
and Chuzhoy~\cite{chuzhoy2012vertex}, as well as recent results on terminal multicut sparsification~\cite{Wahlstrom20multicut}.  By contrast, parameterizing by
the number of terminals in a setting where terminals have unbounded capacity makes
for a much harder sparsification task, and this is the setting that
has been most commonly considered in approximation algorithms.
Indeed, it is known that any exact cut sparsifier for $k$ terminals with unbounded capacity needs to have at least exponential size in $k$, and possibly double-exponential~\cite{krauthgamer2013mimicking}.

\section{Preliminaries} \label{sec:prel}
Throughout the paper, all graphs are directed and unweighted. We begin with standard terminology on cuts and cut sparsifiers.
\begin{definition}[Vertex Cut]
Given a directed unweighted graph $G = (V, E)$ with sets $X, Y\subseteq V$, a set $C\subseteq V$ is a \emph{vertex cut} of $(X, Y)$ if after removing $C$ from $G$, there does not exist a path from a vertex in $X$ to a vertex in $Y$. We denote the size of a minimum vertex cut between $X, Y$ in $G$ as $\mc_G(X, Y)$. 
\end{definition}
\begin{definition}[Vertex Cut Sparsifier]
Consider a directed unweighted graph $G = (V, E)$ with sets $S, T\subseteq V$. A directed unweighted graph $H = (V', F)$ is a \emph{vertex cut sparsifier} of $G$ if 
\begin{enumerate}[label = (\alph*)]
\item $S, T\subseteq V'$.
\item For all $X\subseteq S, Y\subseteq T$, $\mc_G(X, Y) = \mc_H(X, Y)$.
\end{enumerate}
\end{definition}

The problem we consider in this paper is the minimum size of a vertex sparsifier.

\begin{problem}[Minimum Vertex Cut Sparsifier]
Given a directed unweighted graph $G = (V, E)$ with terminal sets $S, T\subseteq V$, what is the minimum number of vertices in a vertex cut sparsifier of $G$? 
\end{problem}

\KW~\cite{KratschW20} obtained a bound for this problem of $O(k^3)$ vertices, where $|S|=|T|=k$. Their application was in the fixed-parameter tractability setting, specifically in constructing kernels for cut-based problems. Our proof utilizes similar matroid-theoretic techniques as in their work. We introduce these techniques next.

\begin{definition}[Matroid]
Given a finite ground set $E$, a set system $M = (E, I)$ where $I\subseteq \mathcal{P}(E)$ is called a \emph{matroid} if
\begin{enumerate}[label = (\alph*)]
\item $\varnothing\in I$.
\item For $X, Y\subseteq E$, if $Y\in I$ and $X\subseteq Y$, then $X\in I$. 
\item If $X, Y\in I$ and $|X| < |Y|$, then there exists $y\in Y\setminus X$ such that $X\cup \{y\} \in I$. 
\end{enumerate}
\end{definition}

Central to our proof is the use of gammoids and their representations.

\begin{definition}[Gammoid]
Given a graph $G=(V,E)$ and a subset of vertices $S$ (which we refer to as the ``source vertices''), the \emph{gammoid} on $S$ is the matroid $M = (V, I)$ where $I$ contains all subsets $T\subseteq V$ such that there exist $|T|$ vertex-disjoint paths from $S$ to $T$ in $G$.
\end{definition}

\begin{definition}[Matroid disjoint union]
Given two matroids on disjoint ground sets, their \emph{matroid disjoint union} is the matroid whose ground set is the union of their ground sets, and a set is independent if the corresponding part in each matroid is independent. 
\end{definition}

\begin{definition}[Representable matroid]
Given a field $F$, a matroid $M = (E, I)$ is \emph{representable} over $F$ if there exists a matrix $A$ over the field $F$ and a bijective mapping from $E$ to the columns of $A$, such that a set $S\subseteq E$ is independent if and only if its corresponding set of columns of $A$ are linearly independent. 
\end{definition}

In particular, it is well known in matroid theory that gammoids are representable in randomized polynomial time; see Marx~\cite{marx2009}.  However, to control the running time, we revisit an older representation by Mason~\cite{mason72gammoids}, and note that it leads to a representation in near-linear time in $|V|+|E|$ on DAGs. (Proofs of results marked $\bigstar$ are deferred to the appendix.)

\begin{lemma}[$\bigstar$ Construction of Gammoid Representation on DAGs] \label{lem:quick-repr}
Given a directed acyclic graph $G = (V, E)$, a set $S\subseteq V$, and $\varepsilon > 0$, with $|V|=n$, $|E|=m$ and $|S|=k$, a representation of the gammoid on $S$ of dimension $k$ over a finite field with entries of bit length $\ell=O(k \log n + \log(1/\varepsilon))$ can be constructed in randomized time $\tilde O((n+m) k\ell)$ with one-sided error at most $\varepsilon$, where $\tilde O$ hides factors logarithmic in $\ell$. 
\end{lemma}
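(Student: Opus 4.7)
The plan is to adapt Mason's dynamic-programming style representation of gammoids, which on a DAG sidesteps the matrix inversion required by the standard transversal-matroid route. First I would construct the representation as follows: list $S = \{s_1, \ldots, s_k\}$ and set $M_{s_i} = e_i \in \mathbb{F}^k$; to each edge $(u,v)\in E$ attach an independent uniform random variable $x_{uv} \in \mathbb{F}$; and processing vertices in topological order, define
\[
M_v \;=\; \sum_{(u,v)\in E} x_{uv}\, M_u
\]
for every $v \notin S$, outputting the $k \times n$ matrix whose columns are the $M_v$ as the claimed representation.

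For correctness, unrolling the recurrence shows that the $i$-th coordinate of $M_v$ equals $\sum_{P:\, s_i\leadsto v} \prod_{e\in P} x_e$, summed over directed $s_i$-to-$v$ paths in $G$. For a candidate set $T = \{t_1, \ldots, t_r\}$ and a source subset $\{s_{i_1},\ldots,s_{i_r}\}$, I would invoke the Lindström--Gessel--Viennot lemma to express the corresponding $r\times r$ subdeterminant as a signed sum, over vertex-disjoint path systems from the chosen sources to permutations of $T$, of the monomials $\prod_{e} x_e$ over the edges they use. In a DAG the edge set of a vertex-disjoint path system uniquely determines the system, so distinct systems contribute distinct monomials and no cancellation can occur: the subdeterminant is a nonzero polynomial exactly when such a system exists. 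Quantifying over source subsets, the $k\times r$ matrix $[\,M_{t_1}\mid\cdots\mid M_{t_r}\,]$ has full column rank over $\mathbb{F}(x)$ iff $T$ is independent in the gammoid. A Schwartz--Zippel argument applied to the polynomial capturing the full matroid structure (of degree $O(nk)$) then shows that sampling each $x_{uv}$ from a field of size $\poly(n)/\varepsilon$ forces the algebraic rank to agree with the combinatorial rank on \emph{all} subsets simultaneously with error at most $\varepsilon$; field entries of bit length $\ell = O(k\log n + \log(1/\varepsilon))$ suffice.

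For the running time, after an $O(n+m)$ topological sort each edge $(u,v)$ contributes one scaled-vector update $M_v \gets M_v + x_{uv} M_u$ at cost $O(k)$ field operations, and each operation costs $\tilde{O}(\ell)$ using fast finite-field arithmetic; summing over the $n$ vertices and $m$ edges yields the advertised $\tilde{O}((n+m)k\ell)$ bound. The main obstacle I anticipate is justifying the no-cancellation step cleanly: one must verify both that distinct vertex-disjoint systems in a DAG yield distinct monomials and that a single Schwartz--Zippel application covers uniformly the exponentially many minors encoding independence. Acyclicity is essential both for this uniqueness argument and for the recurrence to resolve in a single pass --- in a general digraph the relations $M_v = \sum x_{uv} M_u$ form a genuine linear system, forcing precisely the $O(n^\omega)$-style matrix-inversion route that Lemma~\ref{lem:quick-repr} is designed to avoid.
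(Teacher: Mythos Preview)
Your proposal is correct and follows the same construction as the paper: Mason's path-polynomial representation, computed by a single topological-order pass so that each edge contributes one scaled-vector update. The running-time and field-size analyses match the paper's.

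The one genuine difference is the correctness argument. You invoke the Lindstr\"om--Gessel--Viennot lemma and argue that distinct vertex-disjoint path systems in a DAG have distinct edge sets, hence distinct monomials, so the relevant subdeterminant is a nonzero polynomial precisely when the target set is independent in the gammoid. The paper takes a more elementary route that avoids LGV entirely: for independence it exhibits a specific $0/1$ evaluation of the edge variables (set $x_{uv}=1$ on the edges of a fixed vertex-disjoint linkage, $0$ elsewhere) under which $M[S,T]$ becomes a permutation matrix, hence $\det M[S,T]\not\equiv 0$; for dependence it factors $M[S,T]=M[S,C]\cdot M'[C,T]$ through any $(S,T)$-cut $C$ with $|C|<|T|$, bounding the rank by $|C|$. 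Both arguments are valid. The paper's sidesteps the monomial-uniqueness bookkeeping you flag as the main obstacle, while your LGV route is more structural and makes the role of acyclicity explicit in the non-cancellation step.

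One small point the paper handles that you do not: with $M_{s_i}=e_i$ fixed and the recurrence applied only to $v\notin S$, the $i$-th coordinate of $M_v$ actually sums over $s_i$-to-$v$ paths that do not pass through any \emph{other} source, not over all $s_i$-to-$v$ paths. This is ultimately harmless for the gammoid representation (any vertex-disjoint linkage can be truncated so that each path starts at the last source it visits), but the paper preempts the issue by attaching a fresh in-degree-zero copy $s'$ to each $s\in S$, so that the unrolling is literally as stated.
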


We also note that the disjoint union of two representable matroids is representable. Given these definitions, we now present the main arguments of this paper.



\section{Vertex Cut Sparsifier for DAGs}

In this section, we prove our main result, Theorem~\ref{thm:main}. We first borrow the following key concepts from \KW~\cite{KratschW20}.

\begin{definition}[Essential Vertex]
A vertex $v\in V\setminus(S\cup T)$ is called \emph{essential} if there exist $X\subseteq S, Y\subseteq T$ such that $v$ belongs to every minimum vertex cut between $X, Y$.
\end{definition}

\begin{definition}[Neighborhood Closure]
For a digraph $G = (V, E)$ and a vertex $v\in G$, the neighborhood closure operation is defined by removing $v$ from $G$ and adding an edge from every in-neighbor of $v$ to every out-neighbor of $v$. The new graph is denoted by $\cl_v(G)$. 
\end{definition}

\begin{definition}[Closest Set]
For sets of vertices $X, A \subseteq V$, $A$ is closest to $X$ if $A$ is the unique min-cut between $X$ and $A$. 
\end{definition}

We introduce the following definitions to simplify our discussions.
\begin{definition}\label{def:LR}
For sets $X\subseteq S, Y\subseteq T$, let $C$ be a vertex cut for $X, Y$. Let $G'$ be the subgraph formed by the union of all paths from $X$ to $Y$. The \emph{left-hand side} of $C$, denoted $L(C)$, is the set of vertices that are still reachable from $X$ in $G'$ after $C$ is removed. Similarly, the \emph{right-hand side} of $C$, denoted $R(C)$, is the set of vertices that can still reach $Y$ in $G'$ after $C$ is removed. 
\end{definition}

\begin{definition}[Saturation]
Let $C$ be a vertex cut for $X\subseteq S, Y\subseteq T$. For a vertex $v\in C$, we say that $(C, v)$ is saturated by $X$ if there exist $|C| + 1$ paths from $X$ to $C$ that are vertex disjoint except for two paths that both ends at $v$. Similarly, $(C, v)$ is saturated by $Y$ if there exist $|C| + 1$ paths from $C$ to $Y$ that are vertex disjoint except for two paths that both start at $v$.
\end{definition}

The following three lemmas follow from~\cite[Section~5.1]{KratschW20}. Their proofs, as presented in \cite{KratschW20} and Chapter~11.6 of \cite{fomin}, are included in the Appendix for completeness.
\begin{lemma}[Essential Vertex Lemma]\label{lem:ess}
Let $v$ be an essential vertex with respect to ${X\subseteq S}$, $Y\subseteq T$. Let $C$ be any minimum vertex cut between $X, Y$. Then $(C, v)$ is saturated by both $X$ and $Y$.
\end{lemma}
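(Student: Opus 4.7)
The plan is to establish saturation of $(C,v)$ by $X$; saturation by $Y$ follows from a symmetric argument on the reverse graph. The main tool is a vertex-capacitated max-flow/min-cut argument in an auxiliary network.

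First, I would construct an auxiliary network $\hat G$ by standard vertex splitting: each $u \in V(G)$ becomes $u^-, u^+$ joined by a splitting edge of capacity $1$, except that the edge $v^- \to v^+$ is given capacity $2$; each original arc $u \to w$ becomes an arc $u^+ \to w^-$ of infinite capacity; a super-source $s$ is added with infinite-capacity arcs to $x^-$ for each $x \in X$, and a super-sink $t$ with infinite-capacity arcs from $c^+$ for each $c \in C$. The desired saturation configuration is then exactly equivalent to the existence of an integer $s$--$t$ flow of value $k+1$ in $\hat G$, where $k = |C|$.

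Next, I would show that the $s$--$t$ max-flow in $\hat G$ equals $k+1$. Cutting the splitting edges at every $c \in C$ yields total capacity $(k-1) + 2 = k+1$, giving the upper bound. For the lower bound, assume toward a contradiction that the min cut has value at most $k$. Since all non-splitting edges have infinite capacity, this cut must consist purely of splitting edges; let $D' \subseteq V(G)$ be the corresponding vertex set. Any $X$--$Y$ path in $G$ meets $C$ at a first vertex, whose prefix gives an $X$--$C$ path lifting to an $s$--$t$ path in $\hat G$; this prefix must meet $D'$, so $D'$ is also an $X$--$Y$ cut in $G$. If $v \notin D'$, then $|D'| \le k$ exhibits a minimum $(X,Y)$-cut avoiding $v$, contradicting essentiality of $v$. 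If $v \in D'$, then the cut value $|D'| - 1 + 2 \le k$ forces $|D'| \le k - 1 < \mc_G(X,Y)$, which is also a contradiction.

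Finally, any integer $s$--$t$ flow of value $k+1$ decomposes (after discarding cycles) into $k+1$ paths, which project back to $k+1$ paths from $X$ to $C$ in $G$. The unit capacity at every splitting edge other than at $v$ forces these projected paths to share no vertex outside $\{v\}$ and to have distinct endpoints in $C \setminus \{v\}$. A direct counting check ($k+1 = (k-1) + 2$, with $\le 1$ path ending at each $c \neq v$ and $\le 2$ paths using $v$) then forces exactly two of the paths to end at $v$ and prevents $v$ from being an internal vertex of any path. This is precisely the saturation of $(C,v)$ by $X$. The only real technical subtlety is the standard reduction of the min cut to splitting edges, which I handle by giving infinite capacity to all other edges; this is where the bulk of the argument crystallizes.
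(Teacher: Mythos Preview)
Your proof is correct and rests on the same core idea as the paper's---give $v$ effective capacity two and invoke max-flow/min-cut---but the packaging differs. Rather than building a capacitated network with sink at $C$, the paper adds a twin vertex $v'$ with the same in- and out-neighbours as $v$ and argues (by a case analysis isomorphic to your lower-bound step) that $C \cup \{v'\}$ is a minimum $(X,Y)$-cut in the augmented graph $G'$. A single appeal to Menger then produces $|C|+1$ vertex-disjoint $X$--$Y$ paths in $G'$; their prefixes and suffixes at $C \cup \{v'\}$, after collapsing $v'$ back to $v$, give saturation by $X$ and by $Y$ simultaneously. Your route is equally valid but trades that two-for-one payoff for an explicit flow decomposition and a separate symmetry argument, and it requires the extra observation that an $(X,C)$-cut in your network is automatically an $(X,Y)$-cut in $G$ because $C$ already separates $X$ from $Y$.
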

\begin{lemma}[Closure Lemma]\label{lem:cl}
If $v\in V\setminus(S\cup T)$ is not an essential vertex, then $\cl_v(G)$ is a vertex cut sparsifier of $G$. 
\end{lemma}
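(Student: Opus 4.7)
The plan is to prove, for every $X \subseteq S$ and $Y \subseteq T$, the identity $\mc_G(X,Y) = \mc_{\cl_v(G)}(X,Y)$ by establishing the two inequalities via a simple correspondence between $(X,Y)$-paths in the two graphs. The basic observation is that a path in $G$ that uses $v$ through the fragment $u \to v \to w$ contracts to a path in $\cl_v(G)$ by replacing that fragment with the shortcut edge $(u,w)$ introduced by the closure, while any path avoiding $v$ is already a path in $\cl_v(G)$; conversely, any path $P'$ in $\cl_v(G)$ expands to a walk in $G$ from $X$ to $Y$ by undoing each shortcut edge as $u \to v \to w$, and this walk uses no vertex outside $V(P') \cup \{v\}$.

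For the inequality $\mc_{\cl_v(G)}(X,Y) \ge \mc_G(X,Y)$, I would take any $(X,Y)$-cut $C'$ in $\cl_v(G)$ and argue that $C'$ is also an $(X,Y)$-cut in $G$. Note that $v \notin C'$ since $v$ is not a vertex of $\cl_v(G)$. If some path $P$ from $X$ to $Y$ in $G$ avoided $C'$, the corresponding path in $\cl_v(G)$ (with vertex set $V(P) \setminus \{v\}$) would also avoid $C'$, contradicting the choice of $C'$. Hence $|C'| \ge \mc_G(X,Y)$.

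For the inequality $\mc_{\cl_v(G)}(X,Y) \le \mc_G(X,Y)$, I would invoke the hypothesis that $v$ is not essential: by definition, there exists a minimum $(X,Y)$-cut $C$ in $G$ with $v \notin C$. It then suffices to show that $C$ is still an $(X,Y)$-cut in $\cl_v(G)$. Suppose otherwise, and let $P'$ be an $(X,Y)$-path in $\cl_v(G) \setminus C$; expanding $P'$ as above yields a walk in $G$ from $X$ to $Y$ whose vertex set lies in $V(P') \cup \{v\}$, entirely disjoint from $C$ (using $v \notin C$), contradicting $C$ being a cut in $G$.

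The proof is short and I do not anticipate a serious obstacle; the only step worth double-checking is that the expansion in the second inequality really produces a walk in $G \setminus C$. This is immediate because every shortcut edge $(u,w)$ of $\cl_v(G)$ originates from the pair of edges $u \to v$ and $v \to w$ in $G$ by construction of $\cl_v$, and the only vertex introduced by the expansion that is not already on $P'$ is $v$, which lies outside $C$ by our choice of $C$.
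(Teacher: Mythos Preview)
Your proposal is correct and follows essentially the same approach as the paper: both directions are proved by the same path-contraction/path-expansion correspondence, and the non-essentiality of $v$ is used exactly as you do, to pick a minimum $(X,Y)$-cut in $G$ avoiding $v$ and show it remains a cut in $\cl_v(G)$. The only cosmetic difference is that the paper explicitly truncates cycles of the expanded walk to obtain a path, whereas you (equivalently) observe that a walk avoiding $C$ already contradicts $C$ being a cut.
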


\begin{lemma}[Closest Cut Lemma]\label{lem:close}
Let $C$ be a vertex cut for $X\subseteq S, Y\subseteq T$ that is closest to $X$ (resp. $Y$), then for all vertices $v\in C$, $(C, v)$ is saturated by $X$ (resp. $Y$).
\end{lemma}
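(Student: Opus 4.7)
The plan is to prove the contrapositive by an augmenting-path/vertex-duplication argument. Assume $(C,v)$ is not saturated by $X$; I will construct an alternative cut $D \ne C$ between $X$ and $C$ of size $|C|$, contradicting the uniqueness in the definition of closeness. The statement for $Y$ follows by reversing all edges and swapping the roles of $S,T$.

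The key device is a vertex duplication. Define $G^*$ from $G$ by adding a new vertex $v^*$ with the same in-neighbors as $v$ and no out-neighbors, and let the target set be $\tilde C = C \cup \{v^*\}$. By Menger's theorem, $(C,v)$ is saturated by $X$ if and only if the minimum $(X,\tilde C)$-vertex cut in $G^*$ has size exactly $|C|+1$: any family of $|C|+1$ vertex-disjoint paths from $X$ to $\tilde C$ in $G^*$ projects (by identifying $v^*$ with $v$) to the required saturation configuration, and conversely. Under the non-saturation assumption, there exists a cut $D^*$ in $G^*$ with $|D^*| \le |C|$.

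Next I lift $D^*$ back to $G$: set $D = D^*$ if $v^* \notin D^*$, and $D = (D^* \setminus \{v^*\}) \cup \{v\}$ otherwise. A case analysis on a hypothetical path $X \to c \in C$ in $G \setminus D$ (according to whether $c=v$ or $c \in C\setminus\{v\}$, and whether $v \in D$) shows the path lifts to a path in $G^* \setminus D^*$ ending in $\tilde C$, contradicting $D^*$ being a cut. Hence $D$ separates $X$ from $C$ in $G$, and since $C$ separates $X$ from $Y$, $D$ also separates $X$ from $Y$. Minimality of $|C|$ gives $|D| = |C|$; since any $(X,C)$-cut has size $\ge |C|$ (otherwise it would yield a smaller $(X,Y)$-cut), $D$ is in fact a minimum $(X,C)$-cut. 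Closeness of $C$ to $X$ then forces $D = C$.

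It remains to derive a contradiction in both subcases, and this is the step that requires most care. If $v^* \notin D^*$, then $D^* = C$, but removing $C$ from $G^*$ does not disconnect $v^*$ from $X$: picking any in-neighbor $u$ of $v$ with $u \in X \cup L(C)$ (which exists because $v$ lies on some $X$-$Y$ path), the edge $u \to v^*$ survives, giving an $X$-$v^*$ path in $G^* \setminus C$. If $v^* \in D^*$, then $D^* = (C \setminus \{v\}) \cup \{v^*\}$, and $G^* \setminus D^*$ coincides with $G \setminus (C \setminus \{v\})$, in which $v$ remains reachable from $X$ (via the initial $L(C)\cup X$-segment of any $X$-$Y$ path through $v$). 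Either way $D^*$ fails to be a cut, contradicting the non-saturation assumption. The main obstacle is solely in this last paragraph: one must verify that the lifting in both cases respects the placement of $v$ versus $v^*$ in $D^*$, and that the relevant in-neighbor of $v$ really does exist (a mild non-degeneracy that follows from $v$ lying on some $X$-$Y$ path, and may be ensured by a preprocessing step that removes irrelevant vertices).
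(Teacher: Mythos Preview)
Your proof is correct and follows essentially the same approach as the paper: duplicate $v$ as a sink-only copy, invoke Menger's theorem to get a cut $D^*$ of size at most $|C|$ in the augmented graph, and then derive a contradiction by case analysis. The only difference is organizational: the paper directly splits into four cases on which of $v,v^*$ lie in $D^*$ and handles each in the augmented graph, whereas you first project $D^*$ down to a set $D$ in $G$, use closeness to pin down $D=C$, and then return to $G^*$ for the final two-case contradiction. One small point you glossed over is that when $v^*\in D^*$ you implicitly need $v\notin D^*$ to conclude $D^*=(C\setminus\{v\})\cup\{v^*\}$; this does follow from your size count $|D|=|C|=|D^*|$, but it is worth making explicit.
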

Using the saturation properties of essential vertices (as in Lemma~\ref{lem:ess}), \KW presented a construction of matroids which, combined with the unordered version of Theorem~\ref{thm:skew}, computes a set of vertices $P$ of size $O(k^3)$ that contains all the essential vertices. They then apply Lemma~\ref{lem:cl} iteratively to any one vertex not in $P$ and repeat the process.  The repetition is required since Lemma~\ref{lem:cl} may change the essential vertices of the graph.


In our proof, instead of marking only essential vertices we consider all vertices on closest min-cuts.  These have weaker saturation properties than essential vertices (as in Lemma~\ref{lem:close}), but these weaker properties suffice thanks to the ordering imposed in Theorem~\ref{thm:skew}. This ordering, when applied to the topological ordering of a DAG, allows us to mark all vertices of closest min-cuts in a single application of Theorem~\ref{thm:skew}, which eliminates the iterative nature of the previous result.  This intuition will be made clear in the following discussion. 

We now present our construction, which results in Theorem~\ref{thm:main}.

\begin{theorem}\label{thm:real}
For a directed acyclic graph $G = (V, E)$ with terminal sets $S$ and $T$, let $k = |S\cup T|$. Then there exists a set of vertices $P$ of size $O(k^2)$ such that for each pair of $X\subseteq S, Y\subseteq T$, the min-$(X, Y)$ cut that is closest to $Y$ is contained in $P$. This set can be found in time $\tilde O(nk^{2\omega-1}+mk^2)$, and a sparsifier on $P$ can then be constructed in the same asymptotic running time.
\end{theorem}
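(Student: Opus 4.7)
\subparagraph*{Proof plan for Theorem~\ref{thm:real}.}

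The plan is to follow the matroid-theoretic framework of \KW, but substitute their unordered representative-family computation with Theorem~\ref{thm:skew} and exploit the topological order of the DAG to reduce the output size from $O(k^3)$ to $O(k^2)$. Concretely, I would build a matroid whose independent sets witness the one-sided saturation guaranteed by Lemma~\ref{lem:close}, encode each candidate vertex $v\in V\setminus(S\cup T)$ by a $2$-element signature $B_v$, order $\MF=\{B_v\}$ by topological order of $G$, and invoke Theorem~\ref{thm:skew} exactly once.

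First, via Lemma~\ref{lem:quick-repr}, compute linear representations of the gammoid $M_S$ on source set $S$ in $G$ and the gammoid $M_T$ on source set $T$ in $G^{\mathrm{rev}}$, each of dimension at most $k$. To encode the saturation from Lemma~\ref{lem:close}, augment $M_T$ by adjoining, for each $v\in V$, a ``shadow'' element $\hat v$ whose gammoid vector comes from a new graph vertex having the same out-neighbours as $v$ and no in-edges; this shadow exactly captures the existence of a second vertex-disjoint path starting at $v$ along a different out-edge. The matroid disjoint union $M=M_S\oplus M_T^+$ then has dimension $d\le 2k$. For each $v\in V\setminus(S\cup T)$ set $B_v=\{(v,S),\hat v\}$, so every $B\in\MF$ has size $s=2$. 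Order $\MF$ so that $B_v$ precedes $B_u$ whenever $v$ precedes $u$ in a fixed topological order of $G$; thus ``later in $\sigma$'' intuitively means ``closer to $T$''. Now apply Theorem~\ref{thm:skew} with $s=2$ and $r=2k-3$: the hypothesis $d\ge r+s=2k-1$ holds, and Lemma~\ref{lem:close} forces $|C^Y|\le|Y|-1\le k-1$ in the relevant regime (the borderline $|C^Y|=|Y|$ forces $C^Y=Y\subseteq P$ for free). This returns $\MB\subseteq\MF$ with $|\MB|\le\binom{2k}{2}=O(k^2)$; set $P=(S\cup T)\cup\{v:B_v\in\MB\}$.

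For correctness, fix $X\subseteq S$, $Y\subseteq T$, let $C^Y$ be the closest-to-$Y$ min-$(X,Y)$-cut, and for each $v\in C^Y\setminus(S\cup T)$ take $A_v=\{(u,S):u\in C^Y\setminus\{v\}\}\cup\{(u,T):u\in C^Y\}$. Menger's theorem supplies the $|C^Y|$ vertex-disjoint paths $X\to C^Y$ (hence $S\to C^Y$) needed on the $M_S$ side, while Lemma~\ref{lem:close} supplies the $|C^Y|+1$ paths from $C^Y\cup\{v\}$ to $Y\subseteq T$ (with two leaving $v$ through distinct out-edges) needed on the $M_T^+$ side via the shadow $\hat v$. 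Hence $A_v\uplus B_v$ is independent in $M$, and Theorem~\ref{thm:skew} supplies some $w\ge v$ in $\sigma$ with $B_w\in\MB$ and $A_v\uplus B_j$ dependent for every $j>w$. The technical crux is to rule out $w>v$: if $w>v$, then $A_v\uplus B_w$ independent yields (i) $|C^Y|$ vertex-disjoint paths $S\to(C^Y\setminus\{v\})\cup\{w\}$, together with (ii) $|C^Y|$ vertex-disjoint paths $C^Y\to T$ plus an extra path from an out-neighbour of $w$ to $T$, all mutually disjoint. Splicing these matroid witnesses with the original $X\to C^Y\to Y$ flow realising $C^Y$, and using that $G$ is acyclic so the topological ordering forbids back-edges during the splice, one constructs an $(X,Y)$-min-cut whose right-hand side $R$ is strictly contained in $R(C^Y)$, contradicting the closest-to-$Y$ property. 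This path-swap argument, enabled precisely by the DAG structure, is the source of the improvement from $O(k^3)$ to $O(k^2)$; in the general digraph case no such swap is available, which is exactly why \KW's bound is $O(k^3)$.

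Finally, every essential vertex of $G$ lies on some min-cut for its witness pair, in particular on the closest-to-$Y$ one, hence in $P$; so each $v\in V\setminus P$ is non-essential, and we may iteratively apply Lemma~\ref{lem:cl} to the non-$P$ vertices (say in reverse topological order) to obtain a sparsifier $H$ on $P$, or equivalently build $H$ directly by a reachability scan from each vertex of $P$ through the non-$P$ vertices. For the running time, the gammoid representation costs $\tilde O((n+m)k^2)$ by Lemma~\ref{lem:quick-repr}; Theorem~\ref{thm:skew} uses $O(\binom{d}{s}ns^\omega+\binom{d}{s}^{\omega-1}n)=O(nk^{2\omega-2})$ field operations of bit length $\tilde O(k)$ each, totalling $\tilde O(nk^{2\omega-1})$; and the closure phase can be carried out in $\tilde O((n+m)k^{O(1)})$ by traversing the DAG. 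Summing yields the claimed $\tilde O(nk^{2\omega-1}+mk^2)$ bound.
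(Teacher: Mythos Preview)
Your high-level plan coincides with the paper's: form the disjoint union of the gammoid on $S$ in $G$ and the gammoid on $T$ in $G^{\mathrm{rev}}$ augmented with sink-only (``shadow'') copies, take $B_v=\{v,\hat v\}$ so that $s=2$, and invoke Theorem~\ref{thm:skew} once using a topological order to get $\binom{2k}{2}=O(k^2)$ survivors. Two concrete details, however, break your correctness argument.

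First, your query $A_v=\{(u,S):u\in C^Y\setminus\{v\}\}\cup\{(u,T):u\in C^Y\}$ omits the padding $S\setminus X$ on the $M_S$ side and $T\setminus Y$ on the $M_T$ side. Without it, independence in $M_S$ only certifies paths from $S$ (not from $X$) and independence in $M_T^+$ only certifies paths to $T$ (not to $Y$). Take any $w$ lying on an $(S\setminus X)$-to-$(T\setminus Y)$ path that is vertex-disjoint from all $X$-to-$Y$ paths and topologically later than $v$: then $(C^Y\setminus\{v\})\cup\{w\}$ is independent in $M_S$ and $C^Y\cup\{\hat w\}$ is independent in $M_T^+$, so $A_v\uplus B_w$ is independent, yet your splice cannot produce any $(X,Y)$-cut from such witnesses. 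The paper fixes this by setting $A_{(C,v)}=[(S\setminus X)\cup(C\setminus\{v\})]\cup[(T\setminus Y)\cup C]$, which pins the gammoid paths to start in $X$ and end in $Y$.

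Second, your ordering is the wrong direction for this query shape. With the asymmetric query ($C\setminus\{v\}$ on the $M_S$ side, full $C$ on the $M_T$ side) the delicate case is a competitor $u\in R(C^Y)$: disjointness forces the $X\to u$ path among the $M_S$ witnesses to pass through $v$, hence there is a $v\to u$ path. Under a \emph{reverse} topological order, ``$u$ later than $v$'' means precisely ``no $v\to u$ path'', giving the contradiction. Under your forward order, ``$w$ later than $v$'' means ``no $w\to v$ path'', which is perfectly compatible with $v\to w$; for such $w\in R(C^Y)$ both matroid sides can be simultaneously independent (even with the padded query), and the splice you sketch does not yield an $(X,Y)$-min-cut strictly closer to $Y$---note that $(C^Y\setminus\{v\})\cup\{w\}$ need not be a cut at all. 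Switching to the padded query and the reverse topological order, the paper's four-case analysis ($u$ off every $X$--$Y$ path; $u\in L(C)$; $u\in R(C)$; $u\in C$) goes through directly and no path-splicing is needed.
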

\begin{proof}
Let $G_R = (V, E_R)$ be the graph $G$ with the direction of all edges reversed. We make the following modification to our graphs $G, G_R$. For each vertex $v\in V\setminus (S\cup T)$, add a vertex $v'$ into $V$ and for each directed edge $(u, v)\in E$, add $(u, v')$ into $E$. We refer to $v'$ as the sink-only copy of $v$. Denote this new directed graph $G' = (V', E')$, we add sink-only copies of vertices to $G_R$ and obtain $G_R' = (V', E_R')$. Note that $G', G_R'$ are both acyclic. Enumerate $V$ in a reverse topological ordering, namely $V = (v_1, v_2, \cdots, v_n)$ where $v_i$ cannot reach any $v_j$ for $j > i$. 

Let $M_1 = (E_1, I_1)$ be the gammoid constructed on the graph $G$ and the set of terminals $S$ in $G$, and let $M_2 = (E_2, I_2)$ be the gammoid constructed on the graph $G_R'$ and the set of terminals $T$ in $G_R'$. To distinguish between vertices of $E_1$ and $E_2$, we label vertices in $E_1$ as $v_1, \ldots, v_n$, and elements in $E_2$ as $\ol{v_1}, \ldots, \ol{v_n}, \ol{v_1}', \ldots, \ol{v_n}'$. Note that $E_1$ does not contain sink-only copies. For any set of vertices $U\subseteq V$, denote the respective sets in $E_1$ and $E_2$ as $U_1$ and $U_2$. Let $M$ be the disjoint union of matroids $M_1$ and $M_2$, so $M$ is representable and it has rank $O(k)$. 

Observe that for any $X\subseteq S,\, Y\subseteq T$, the min-cuts between $X$ and $Y$ in $G$ are the same as in $G'$ because the vertex copies $v'$ we added to $G$ have no outgoing edges. Therefore, $G$ and $G'$ have the same set of closest cuts.  We now consider the following constructions. For a min-cut $C$ between $X, Y$ that is closest to $Y$, and a vertex $v\in C$, define
\[
A_{(C, v)} = [(S_1\sm X_1) \cup (C_1\sm \{v\})] \cup [(T_2\sm Y_2)\cup C_2].
\]
Let $\tilde\MA$ consist of $A_{(C, v)}$ for all such cut-vertex pairs. Define 
\[
\MF = \{B_{v} = \{v, \ol{v}'\}\mid v\in V\}. 
\]
We pause for a moment to explain the ideas behind these definitions. First of all, note that the algorithm in Theorem~\ref{thm:skew} only takes as input a family $\MF$. Intuitively, one should think of the sets in $\MF$ as answers to potential queries, and the sets in $\MA$ (defined in Theorem~\ref{thm:skew}) as all queries answered by $\MF$. As we will show, the family of queries $\tilde\MA$ we defined is a subset of $\MA$. If we can further show that for each $v$ in a closest min-cut $C$, $B_v$ is the unique answer that the algorithm will find for query $A_{(C,v)}$, then we know that the output of the algorithm must contain all vertices of all closest min-cuts, because all queries in $\MA\supseteq \tilde\MA$ must be answered. 

More specifically, for each query $A_{(C, v)}$, Theorem~\ref{thm:skew} promises to find the answer $B_{u}$ to $A_{(C, v)}$ (which means $A_{(C, v)}\uplus B_{u}$ is independent in $M$) that is the last answer according to the reverse topological ordering on $\MF$. This means for all $w > u$, $A_{(C, v)}\uplus B_{w}$ is dependent in $M$. Since our goal is for Theorem~\ref{thm:skew} to output a set containing all vertices on all closest min-cuts, we want to show that $B_{v}$ is the last answer in the ordering to $A_{(C,v)}$. This is precisely captured in the following claim.
\begin{claim}\label{clm:main}
For each $A = A_{(C, v)}$, $B_{v}$ is the unique set in $\MF$ such that 
\begin{itemize}
    \item $A\uplus B_{v}$ is independent in $M$, and
    \item for all $u > v$ in the reverse topological ordering of $V$, $A\uplus B_u$ is dependent in $M$.
\end{itemize}
\end{claim}
\begin{proof}
We first show that $A\uplus B_{v}$ is independent. Since $M$ is a disjoint union matroid, we need to show that $(A\cap E_1) \uplus \{v\} = (S_1\setminus X_1) \cup C_1$ is independent in $M_1$ and 
$(A\cap E_2) \uplus \{\ol{v}'\} = (T_2\setminus Y_2) \cup C_2\cup \{\ol{v}'\}$ is independent in $M_2$.

Since both $M_1$ and $M_2$ are gammoids, we need to show existence of vertex disjoint paths from $S_1$ to $(S_1\setminus X_1) \cup C_1$. First note that singleton paths can cover all vertices in $S_1\setminus X_1$. Since $C_1$ is a min-cut between $X_1$ and $Y_1$, by duality there exist vertex disjoint paths from $X_1$ to $C_1$. Therefore $(S_1\setminus X_1) \cup C_1$ is independent in $M_1$. Similarly, singleton paths can cover all vertices in $T_2\setminus Y_2$. It suffices for us to show the existence of vertex disjoint paths from $Y_2$ to $C_2 \cup \{\ol{v}'\}$. 

By Lemma~\ref{lem:close}, there exist $|C_2| + 1$ paths from $Y_2$ to $C_2$ that are vertex disjoint except for two paths that both ends at $\ol{v}$. Therefore, we can redirect one of these two paths to end at $\ol{v}'$, and we obtain $|C_2| + 1$ vertex disjoint paths from $Y_2$ to $C_2 \cup \{\ol{v}'\}$. This proves independence. 

Now fix $u > v$ in the reverse topological ordering, so that there does not exist a path from $v$ to $u$. We want to show that either $(A\cap E_1) \uplus\{u\}$ is dependent in $M_1$, or $(A\cap E_2)\uplus \{\ol{u}'\}$ is dependent in $M_2$. Consider four possible cases: 
\begin{itemize}
\item $u$ is not on any path from $X$ to $Y$. Assume for the sake of contradiction that both $(A\cap E_1) \uplus\{u\}$ and $(A\cap E_2)\uplus \{\ol{u}'\}$ are independent. Then there must exist a path from $X$ to $u$ and a path from $u$ to $Y$, which forms a path from $X$ to $Y$ through $u$ (since $G$ is acyclic), contradiction.
\item $u\in L(C)$ (see Definition~\ref{def:LR}). Then any path from $u$ to $Y$ (or from $Y$ to $u$ in $G_R$) must intersect with $C$, which means there does not exist vertex disjoint paths from $Y$ to $C\cup \{u\}$ in $G_R$. Therefore $(A\cap E_2)\uplus \{\ol{u}'\}$ is dependent.
\item $u\in R(C)$. Then any path from $X$ to $u$ must intersect $C$. Assume for the sake of contradiction that $(A\cap E_1)\uplus \{u\}$ is independent, then there exist vertex disjoint paths from $X$ to $C\sm \{v\}\cup \{u\}$, which means there is a path from $X$ to $u$ that goes through $v$. However, since $u > v$ in the topological ordering, there does not exist paths from $v$ to $u$. This is a contradiction, so $(A\cap E_1)\uplus \{u\}$ is dependent.
\item $u\in C$. Then $u\in (A\cap E_1)$, which implies $(A\cap E_1)\uplus \{u\}$ contains two copies of $u$. Therefore it is dependent. 
\end{itemize}
This completes the proof.
\end{proof}
We now apply Theorem~\ref{thm:skew} on $\MF$ with a reverse topological ordering, and we note that the family $\tilde\MA$ we defined in this proof is a subfamily of $\MA$ defined in Theorem~\ref{thm:skew}. Let $P$ be the collection of vertices that Theorem~\ref{thm:skew} finds. Then by the above claim, for each pair of $X\subseteq S, Y\subseteq T$ and their min-cut $C$ closest to $Y$, all vertices in $C$ must be in $P$. Note that this also implies that all essential vertices are in $P$. 

To construct the final sparsifier $H$ on $P$, for each vertex $u\in P$, we run a depth-first search starting at $u$ on the graph $G_u$, defined to be $G$ minus the out-edges of vertices in $P\sm\{u\}$. For each vertex $v\in P\sm\{u\}$ that is reachable from $u$ in $G_u$, we add an edge $(u,v)$ to $H$. Note that this procedure returns the same graph $H$ as the one that sequentially applies Lemma~\ref{lem:cl} on all vertices not in $P$, but achieves a shorter runtime of $O(k^2m)$. To see the equivalence, observe that in both cases, there is an edge $(u,v)$ in the final sparsifier if and only if there is a path from $u$ to $v$ in $G$ whose internal vertices are disjoint from $P$. 
We conclude that the output graph $H$ is a valid sparsifier. 

For the final running time, note that computing the gammoids takes time $\tilde O((m+n)k^2)$ by Lemma~\ref{lem:quick-repr} with $\varepsilon=1/n^{O(1)}$, and computing the representative sets takes time $\tilde O(nk^{2\omega-1})$ by taking $d=k$ and $s=2$ in Theorem~\ref{thm:skew}.
\end{proof}

We make a few remarks regarding this proof. Intuitively, the gammoid $M_2$ and its respective query $A_{(C, v)} \cap E_2 = (T_2\setminus Y_2) \cup C_2$ is used to filter out all vertices on the left-hand side of $C$, because no vertex in $L(C)$ can reach $T$ without crossing $C$. The gammoid $M_1$ and its query $A_{(C, v)} \cap E_1 = (S_1\sm X_1) \cup (C_1\sm \{v\})$, however, is different, because the query itself allows terminals in $X$ to reach the right-hand side of $C$ through $v$. If we do not impose a reverse topological ordering on the vertices and use the unordered version of Theorem~\ref{thm:skew} (Lemma~1.1 in~\cite{kratsch2012representative}), our proof will fail -- there may exists $u\in R(C)$ reachable from $v$ such that $A_{(C,v)}\uplus B_u$ is independent, and the algorithm may not find $B_v$ as the answer. By imposing the reverse topological ordering, we demand the algorithm to find the answer last in the ordering, thereby ensuring the algorithm discovering $B_v$. 

This is the critical difference between our proof and \KW's proof. In \KW's paper, they presented a construction with three matroids -- one gammoid on $G'$ and $S_1$ (note that our first gammoid is constructed on $G$ and $S_1$), one gammoid on $G_R'$ and $T_2$, and one uniform matroid of rank $k$ on $V$. They then utilized the property that essential vertices can be saturated from both sides (see Lemma~\ref{lem:ess}) and constructed queries similar to our definition of $A_{(C, v)}$. Their first (resp. second) gammoid serves to filter out $R(C)$ (resp. $L(C)$), and they use the uniform matroid to filter out vertices $u\ne v\in C$. We managed to merge their first gammoid and uniform matroid with an asymmetrical construction, thereby proving a stronger bound.

We finally remark that in our proof, we never explicitly compute the queries $A_{(C, v)}$. As mentioned previously, the algorithm in Theorem~\ref{thm:skew} only takes as input the family $\MF$, and we are simply showing that given our construction of $\MF$, $\tilde\MA$ is a subset of queries answered. If one can show that another meaningful set of queries are answered by the same $\MF$, then they can derive more properties of the output set of the algorithm (while the output set itself remains unchanged). 

We now present tight lower bound constructions in the following section.

\section{Lower Bound Constructions}\label{sec:lower}
In this section we present two constructions graphs for which any vertex cut sparsifier requires $\Omega(k^2)$ vertices.  The first construction is presented by \KW~\cite{KratschW20} (construction found in arXiv preprint only); note that the graph is a DAG.  

\begin{lemma}[$\bigstar$]\label{lemma:kw-quadratic}
Let $S$ and $T$ be two vertex sets of size $2k$. Enumerate them as 
\[
S = \{v_1, v_1', v_2, v_2', \cdots, v_k, v_k'\}, \\
T = \{u_1, u_1', u_2, u_2', \cdots, u_k, u_k'\}.
\]
For each $i, j\in [k]$, create a vertex $w_{i,j}$ and add edges from $v_i, v_i'$ to $w_{i,j}$, and from $w_{i,j}$ to $u_j, u_j'$. 
    Any vertex cut sparsifier for the resulting graph requires $\Omega(k^2)$ vertices.
\end{lemma}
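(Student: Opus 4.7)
The plan is to identify, for each pair $(i,j) \in [k]\times [k]$, a vertex $c_{i,j} \in V(H)$ that constitutes a min-cut between $X_i := \{v_i, v_i'\}$ and $Y_j := \{u_j, u_j'\}$ in $H$, and to argue that these $k^2$ vertices are pairwise distinct. Since $\mc_G(X_i, Y_j) = 1$ (witnessed by $\{w_{i,j}\}$), the sparsifier property yields $\mc_H(X_i, Y_j) = 1$, so such a $c_{i,j}$ exists and every directed path from $X_i$ to $Y_j$ in $H$ must pass through it. Note that we make no attempt to force $c_{i,j}$ to be a non-terminal; $4k$ of the $k^2$ vertices could in principle be terminals, but this loss is absorbed into the asymptotics.

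The distinctness argument would proceed by casework on how two indices could coincide. In the \emph{same-row} case $c_{i,j_1} = c_{i,j_2}$ with $j_1 \neq j_2$, the single vertex $c_{i,j_1}$ would cut $X_i$ from $Y_{j_1} \cup Y_{j_2}$ in $H$, giving $\mc_H(X_i, Y_{j_1}\cup Y_{j_2}) \le 1$. However a direct inspection in $G$ shows $\mc_G(X_i, Y_{j_1}\cup Y_{j_2}) = 2$ (witnessed by $\{w_{i,j_1}, w_{i,j_2}\}$, with no cut of size $1$ available), contradicting the sparsifier property. The \emph{same-column} case $c_{i_1,j} = c_{i_2,j}$ is symmetric.

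The main obstacle is the \emph{off-diagonal} case $c_{i_1,j_1} = c_{i_2,j_2} = w$ with $i_1 \neq i_2$, $j_1 \neq j_2$, since no pair-wise query involving only one of $\{i_1,i_2\}$ and one of $\{j_1,j_2\}$ is immediately contradictory. The key idea is to introduce the four-way query $X = X_{i_1}\cup X_{i_2}$, $Y = Y_{j_1}\cup Y_{j_2}$, for which $\mc_G(X,Y) = 4$ (witnessed by the four $w_{i_a,j_b}$, $a,b\in\{1,2\}$). In $H$, the set $\{w, c_{i_1,j_2}, c_{i_2,j_1}\}$ has size at most $3$ yet blocks every $X \to Y$ path: each path of type $X_{i_a} \to Y_{j_b}$ is killed by $w$ when $a=b$ and by $c_{i_1,j_2}$ or $c_{i_2,j_1}$ when $a\neq b$. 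This produces an $H$-cut of size at most $3$, contradicting $\mc_H(X,Y)=4$. Having ruled out all three collision cases, the $c_{i,j}$ are pairwise distinct, so $|V(H)| \ge k^2$, yielding the claimed $\Omega(k^2)$ lower bound.
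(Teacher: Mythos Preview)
Your proposal is correct and follows essentially the same approach as the paper's own proof: pick a size-$1$ min-cut $c_{i,j}$ for each pair, then rule out coincidences via the same three cases (same row, same column, off-diagonal), with the off-diagonal case handled by the identical observation that $\{c_{i_1,j_1}, c_{i_1,j_2}, c_{i_2,j_1}\}$ would be a cut of size at most $3$ for a query whose value in $G$ is $4$. The only differences are cosmetic (notation, and your explicit remark that some $c_{i,j}$ might be terminals).
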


\newcommand{\emincut}{\textup{emincut}}
\newcommand{\dvec}{} 

Next, we show the same bound holds also for \emph{edge cut} sparsifiers, when both $G$ and its sparsifier $H$ are undirected.  
For simplicity, we also restrict ourselves to just one terminal set $T$.

\begin{definition}[(Undirected) Edge Cut Sparsifier]
Consider an unweighted, undirected graph $G = (V, E)$ with terminal set $T\subseteq V$. An unweighted, undirected graph $H = (V', F)$ is an \emph{edge cut sparsifier} of $G$ if 
\begin{enumerate}[label = (\alph*)]
\item $T\subseteq V'$.
\item For all disjoint $X, Y\subseteq T$, $\emincut_G(X, Y) = \emincut_H(X, Y)$.
\end{enumerate}
\end{definition}



\begin{figure}\centering
\includegraphics[scale=1.2]{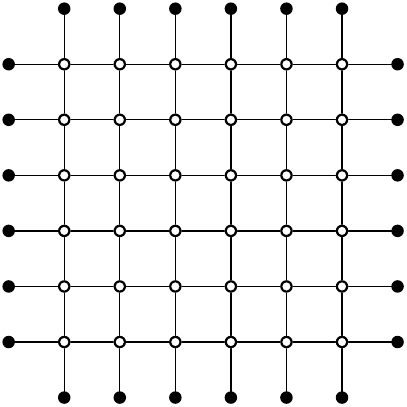}
\caption{The lower bound construction for $k=6$.}\label{fig:grid1}
\end{figure}

Consider the $k$-by-$k$ grid with leaf terminals attached in Figure~\ref{fig:grid1}: begin with a $k$-by-$k$ grid of non-terminals, and add the following leaf (degree-1) terminals:
 \begin{enumerate}
 \item one leaf terminal adjacent to each vertex on the top row, called the \emph{top} terminals,
 \item one leaf terminal adjacent to each vertex on the bottom row, called the \emph{bottom} terminals,
 \item one leaf terminal adjacent to each vertex on the left column, called the \emph{left} terminals, and
 \item one leaf terminal adjacent to each vertex on the right column, called the \emph{right} terminals.
 \end{enumerate}
Note that non-terminals on the corner of the grid have two terminals attached. This concludes the construction of the terminal set $T$.  
We now show that any edge cut sparsifier of $G$, even with directed edges allowed, has at least $k^2$ vertices. Since $G$ has $O(k)$ terminals, this proves the quadratic lower bound. 

\begin{lemma}\label{lem:1}
Any (undirected) edge cut sparsifier of $G$ has $\Omega(k^2)$ vertices.
\end{lemma}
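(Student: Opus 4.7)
The plan is to identify $\Omega(k^2)$ cut queries whose collective constraints force $\Omega(k^2)$ vertices in $H$. Write $T^T, T^B, T^L, T^R$ for the top, bottom, left, and right terminal sets (each of size $k$), with members $t^T_j, t^B_j, t^L_i, t^R_i$.

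Consider first the global cut $(T^T \cup T^L,\, T^B \cup T^R)$. In $G$, $\emincut_G = 2k$: cutting all $2k$ leaf edges on one side gives the upper bound, while the $k$ column flows (from $t^T_j$ down column $j$ to $t^B_j$) together with the $k$ row flows (from $t^L_i$ across row $i$ to $t^R_i$) give $2k$ simultaneously edge-disjoint paths in $G$, matching the lower bound. Hence $\emincut_H = 2k$, so $H$ also admits $2k$ simultaneously edge-disjoint paths. Combined with the sub-queries $(T^T, T^B)$ and $(T^L, T^R)$, each of emincut $k$, these $2k$ paths split into $k$ top-to-bottom paths $C_1, \ldots, C_k$ and $k$ left-to-right paths $R_1, \ldots, R_k$, where each $C_j$ starts at $t^T_j$ and ends at some $t^B_{\sigma(j)}$, and each $R_i$ starts at $t^L_i$ and ends at some $t^R_{\tau(i)}$ under some bijections $\sigma,\tau:[k]\to[k]$ (obtained from any max-flow decomposition of the two sub-queries in $H$).

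For each pair $(i,j)\in[k]^2$, consider the local query $X_{ij}=\{t^T_j,t^L_i\}$, $Y_{ij}=\{t^B_j,t^R_i\}$, of emincut $2$ in $G$. I would argue that in $H$, the column path $C_j$ and the row path $R_i$ must share some internal vertex $v_{ij}\in V(H)$: otherwise $C_j$ and $R_i$ would be vertex-disjoint inside $H$, and one could splice them with the local $2$-flow witnessing $\emincut_H(X_{ij},Y_{ij})=2$ to upgrade the global $2k$-flow to one of value $2k+1$, contradicting $\emincut_H(T^T\cup T^L,T^B\cup T^R)=2k$. The final step is to argue that these $k^2$ vertices $\{v_{ij}\}$ are pairwise distinct in $V(H)$, which I would attempt by supposing $v_{ij}=v_{i'j'}$ with $(i,j)\neq(i',j')$ and using the edge-disjointness of the global $2k$-flow to reroute paths through the coincident vertex, producing a witness that exceeds the emincut of a composite query over $\{t^T_j,t^T_{j'},t^L_i,t^L_{i'}\}$ versus $\{t^B_j,t^B_{j'},t^R_i,t^R_{i'}\}$.

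The main obstacle is this distinctness step: a naive bipartition-counting argument on $V(H)$ yields only $|V(H)|=\Omega(\log k)$, so the analysis must exploit the full combinatorial structure of $2k$ simultaneously edge-disjoint row/column flows together with the $k^2$ local $2$-cut constraints. I expect the cleanest route will be to consider simultaneous rerouting across pairs of local queries (rather than one query at a time), using submodularity of $\emincut_H$ to convert a coincidence $v_{ij}=v_{i'j'}$ into a contradiction with the emincut of the four-terminal auxiliary query described above.
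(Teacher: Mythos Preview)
Your approach is flow-based, aiming to exhibit $k^2$ distinct path-intersection points in $H$, whereas the paper's proof is entirely cut-based: it takes nested families of min-cut sides $C_0\subseteq\cdots\subseteq C_k$ and $R_0\subseteq\cdots\subseteq R_k$ in $H$ (for the queries ``left plus first $i$ top/bottom'' and ``top plus first $j$ left/right''), defines the grid cells $S_{i,j}=(C_{i+1}\setminus C_i)\cap(R_{j+1}\setminus R_j)$, and uses submodularity together with several carefully chosen auxiliary queries to show every $S_{i,j}$ is nonempty. No paths are ever considered.

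Your proposal has genuine gaps that appear well before the distinctness step you flag. First, the decomposition does not work as stated. The $k$ paths $C_j$ from a $(T^T,T^B)$ max-flow and the $k$ paths $R_i$ from a $(T^L,T^R)$ max-flow are each edge-disjoint within their own family, but nothing forces the two families to be edge-disjoint from one another; so together they are not ``the global $2k$-flow''. Conversely, a genuine $2k$-flow for $(T^T\cup T^L,\,T^B\cup T^R)$ in $H$ need not pair top with bottom and left with right: a path starting at $t^T_j$ may terminate at some $t^R_i$, since $\emincut_G(T^T,T^R)=k$ as well, and the sparsifier gives you no control over which bijection a max-flow realises.

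Second, the intersection claim is not established. Even granting edge-disjoint column/row families, vertex-disjointness of $C_j$ and $R_i$ does not yield a $(2k{+}1)$-flow: $C_j$ and $R_i$ already use the four terminals $t^T_j,t^L_i,t^B_{\sigma(j)},t^R_{\tau(i)}$, and the ``local $2$-flow'' for $(X_{ij},Y_{ij})$ is just another pair of edge-disjoint paths between the same degree-constrained terminals, so there is nothing to splice in. More fundamentally, edge-disjoint flows control edge counts, not vertex counts; a sparsifier is only promised to match cut \emph{values}, so two specific max-flow paths in $H$ have no a priori reason to meet. The paper sidesteps this entirely by reasoning about cut sides and using submodularity equalities to rule out ``long'' edges across the induced grid of regions, which is what ultimately forces each cell $S_{i,j}$ to be nonempty.

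Given these earlier gaps, the distinctness obstacle you highlight is moot; the argument would need to be rebuilt on the cut side (or a substantially different flow argument) before that step is reached.
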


\begin{figure}\centering
\includegraphics[scale=1.2]{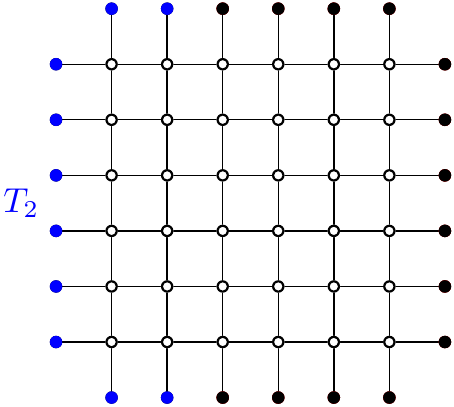}
\caption{The set $T_i$ for $i=2$.}\label{fig:grid2}
\end{figure}

For the rest of this section, we prove \Cref{lem:1}. 
For an undirected graph $G$ and a subset of vertices $S$, we define $\partial_{G}S$ as the set of edges with exactly one endpoint in $S$. 

Consider a directed sparsifier $H$ with vertex set $V'\supseteq T$. For $0\le i\le k$, let $T_i\subseteq T$ be following set of terminals: all of the left terminals, plus the first $i$ top and bottom terminals, counting from the left (see Figure~\ref{fig:grid2}). Let $C_i\subseteq V'$ be $T_i$'s side of the mincut between $T_i,T\setminus T_i$ in the sparsifier $H$, which must have $|\partial C_i| = |\partial(V'\setminus C_i)| = k$, since $\emincut_G(T_i,T\setminus T_i)$ is $k$ and $H$ is a sparsifier of $G$. Similarly, let $T'_i\subseteq T$ be all the top terminals, plus the first $i$ left and right terminals, counting from the top, and let $R_i\subseteq V'$ be $T'_i$'s side of the mincut between $T'_i,T\setminus T'_i$ in $H$, so that $|\partial R_i|=|\partial(V'\setminus R_i)|=k$.

\begin{claim}
Without loss of generality, we may assume that $C_0\subseteq C_1\subseteq \cdots\subseteq C_k$ and $R_0\subseteq R_1\subseteq\cdots\subseteq R_k$.
\end{claim}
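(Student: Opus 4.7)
The plan is a standard submodular uncrossing. The cut function $f(S) := |\partial S|$ is submodular on $V'$, whether $\partial$ denotes the undirected boundary of $S$ or (in the directed setting, which the lemma allows) the set of arcs leaving $S$ in $H$: in both cases $f(A\cap B)+f(A\cup B)\le f(A)+f(B)$ for all $A,B\subseteq V'$. This is the only structural ingredient I will use.

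Fix indices $0\le i<j\le k$, and recall that $T_i\subseteq T_j$ by construction. The set $C_i\cap C_j$ contains $T_i$ (since $T_i\subseteq C_i$ and $T_i\subseteq T_j\subseteq C_j$) and is disjoint from $T\setminus T_i$ (already so in $C_i$), so it separates $T_i$ from $T\setminus T_i$, giving $f(C_i\cap C_j)\ge k$. Symmetrically $C_i\cup C_j$ contains $T_j$ and is disjoint from $T\setminus T_j$ (already so in both $C_i$ and $C_j$), so $f(C_i\cup C_j)\ge k$. Since $f(C_i)+f(C_j)=2k$, submodularity forces equalities throughout, and hence $C_i\cap C_j$ and $C_i\cup C_j$ are themselves min-cuts for their respective terminal pairs.

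From here I would build the nested sequence inductively. Choose $C_0$ arbitrarily; having fixed nested min-cuts $C_0\subseteq\cdots\subseteq C_{i-1}$, pick any min-cut $C_i^\star$ separating $T_i$ from $T\setminus T_i$ and replace it with $C_{i-1}\cup C_i^\star$, which the preceding paragraph guarantees is still a valid min-cut and which now contains $C_{i-1}$. Iterating up to $i=k$ yields the nested chain $C_0\subseteq C_1\subseteq\cdots\subseteq C_k$, and applying the identical procedure to the sequence $T'_0\subseteq T'_1\subseteq\cdots\subseteq T'_k$ produces the nested chain $R_0\subseteq R_1\subseteq\cdots\subseteq R_k$.

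The argument is essentially routine, so I do not expect any real obstacle; the only mild point of care is that, when $H$ is directed, one must interpret $f$ as the count of arcs from $S$ to $V'\setminus S$ rather than an undirected boundary, and verify that $f(C_i)=k$ in that case, which follows from the sparsifier property since $\emincut_H(T_i,T\setminus T_i)=\emincut_G(T_i,T\setminus T_i)=k$.
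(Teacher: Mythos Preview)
Your argument is correct and uses the same submodular uncrossing as the paper: both rely on $|\partial(C_i\cap C_j)|+|\partial(C_i\cup C_j)|\le|\partial C_i|+|\partial C_j|=2k$ together with the lower bounds $|\partial(C_i\cap C_j)|,|\partial(C_i\cup C_j)|\ge k$ to conclude that intersections and unions remain min-cuts. The only organizational difference is that you build the chain inductively in $k$ steps (replacing each new $C_i^\star$ by $C_{i-1}\cup C_i^\star$), whereas the paper repeatedly uncrosses adjacent pairs $C_i,C_{i+1}$ and invokes the potential $\sum_i|C_i|^2$ to argue termination; your version is slightly cleaner in that no separate termination argument is needed.
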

\begin{proof}
We only prove the statement for $C_i$; the one for $R_i$ follows by symmetry of $G$. Suppose for contradiction that $C_i\setminus C_{i+1}\ne\emptyset$. By submodularity,
\[ |\partial C_i|+|\partial C_{i+1}| \ge |\partial(C_i\cap C_{i+1})|+|\partial(C_i\cup C_{i+1})| .\]
Since $C_i\cap C_{i+1}$ is a $(T_i,T\setminus T_i)$-cut in $H$ and $C_i\cup C_{i+1}$ is a $(T_{i+1},T\setminus T_{i+1})$-cut in $H$, their cut values $\partial(C_i\cap C_{i+1})$ and $\partial(C_i\cup C_{i+1})$ are at least $k$. Therefore,
\[ k+k = |\partial C_i|+|\partial C_{i+1}| \ge |\partial(C_i\cap C_{i+1})|+|\partial(C_i\cup C_{i+1})| \ge k+k ,\]
so the inequality must be an equality. It follows that we can replace $C_i$ with $C_i\cap C_{i+1}$, which is still a $(T_i,T\setminus T_{i+1})$-cut in $H$, and we can also replace $C_{i+1}$ with $C_i\cup C_{i+1}$. While there exists an $i$ such that $C_i\setminus C_{i+1}\ne\emptyset$, we perform the replacement; this can only happen a finite number of times, since the quantity $\sum_{i=0}^k|C_i|^2$ increases by at least $1$ each time and has an upper limit.
\end{proof}

For each $1\le i,j<k$, define $S_{i,j}\subseteq V'$ as $S_{i,j}=(C_{i+1}\setminus C_i)\cap(R_{j+1}\setminus R_j)$; see Figure~\ref{fig:grid3}. Our goal is to prove that $S_{i,j}\ne\emptyset$ for all $i,j$; since the sets are disjoint over all $1\le i,j\le k$, this implies the $k^2$ bound.
\begin{claim}\label{clm:3}
There are no edges between $S_{i,j}$ and $S_{i',j'}$ for $i\ne i'$ and $j\ne j'$.
\end{claim}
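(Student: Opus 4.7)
The plan is to prove Claim~\ref{clm:3} by applying submodularity of the cut function in $H$ to two carefully chosen sets $A,B\subseteq V'$. Assume without loss of generality that $i<i'$; this leaves two subcases, $j<j'$ and $j>j'$, handled by the same scheme. In each subcase I will pick $A,B$ so that $S_{i,j}\subseteq A\setminus B$ and $S_{i',j'}\subseteq B\setminus A$. Every edge between $S_{i,j}$ and $S_{i',j'}$ then contributes to the ``cross'' term in the identity
\[
|\partial A|+|\partial B|=|\partial(A\cap B)|+|\partial(A\cup B)|+2\,|E_H(A\setminus B,\,B\setminus A)|,
\]
valid for undirected $H$. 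Hence it suffices to choose $A,B$ with $|\partial A|+|\partial B|=2k$ and to show $|\partial(A\cap B)|+|\partial(A\cup B)|\ge 2k$; this forces the cross term to vanish, killing all edges between $S_{i,j}$ and $S_{i',j'}$.

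The concrete choice in both subcases is $A=C_{i'}$, so $S_{i,j}\subseteq C_{i+1}\subseteq A$ and $S_{i',j'}\cap A=\emptyset$. For $B$ I take $B=V'\setminus R_{j+1}$ when $j<j'$ and $B=R_{j'+1}$ when $j>j'$; the nested chain $R_0\subseteq\cdots\subseteq R_k$ from the previous claim then places $S_{i,j}$ and $S_{i',j'}$ on opposite sides of $B$, and one checks directly from the definition of $S_{i,j}$ and $S_{i',j'}$ that $S_{i,j}\subseteq A\setminus B$ and $S_{i',j'}\subseteq B\setminus A$. In either choice $|\partial B|=k$ (using $|\partial X|=|\partial(V'\setminus X)|$ in the undirected setting), so $|\partial A|+|\partial B|=2k$.

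For the lower bound $|\partial(A\cap B)|+|\partial(A\cup B)|\ge 2k$, I invoke the sparsifier property $|\partial_H X|\ge \emincut_G(T\cap X,\,T\setminus X)$ for any $X\subseteq V'$. Identifying $T\cap(A\cap B)$ and $T\setminus(A\cup B)$ from the definitions of $T_{i'}$ and $T'_{j+1}$ (or $T'_{j'+1}$), both turn out to be collections of leaf terminals on the grid's boundary. Since each leaf has degree $1$ and an edge-disjoint path can be routed from each leaf along its row or column to a distinct terminal on the other side, the mincut in $G$ from any such collection to its complement in $T$ equals the collection's size. A short count verifies that the two relevant sizes sum to exactly $2k$ in each subcase, so $|\partial(A\cap B)|+|\partial(A\cup B)|\ge 2k$ and submodularity is tight. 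This proves Claim~\ref{clm:3}.

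The main obstacle is not the submodularity step itself but the bookkeeping: the choice of $B$ must be adapted to the sign of $j-j'$, and the leaf-terminal counts in $T\cap(A\cap B)$ and $T\setminus(A\cup B)$ must be checked to add to $2k$ in both subcases. Once those routine terminal calculations are carried out, tightness of submodularity is automatic and the argument concludes.
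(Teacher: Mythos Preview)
Your proposal is correct and follows essentially the same route as the paper: apply submodularity to one column cut $C_\alpha$ and one row cut $R_\beta$ (or its complement) with indices chosen so that $S_{i,j}$ and $S_{i',j'}$ land in the two off-diagonal quadrants $A\setminus B$ and $B\setminus A$, then use the sparsifier property together with the grid mincut values to force the submodularity inequality to be tight, killing the cross edges. The paper uses $A=C_{i+1}$ and $B=R_{j'+1}$ (and complements for the other sign cases) whereas you use $A=C_{i'}$ and $B=R_{j'+1}$ or $V'\setminus R_{j+1}$; both choices work, and the ``routine terminal calculations'' you defer do indeed sum to $2k$ in each subcase, since the corner L-shaped terminal sets admit row/column flows to the opposite side.
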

\begin{proof}
First, consider some $1\le i,i',j,j'\le k$ where  $i<i'$ and $j<j'$ (see Figure~\ref{fig:grid3}). By submodularity on the sets $C_{i+1}$ and $R_{j'+1}$,
\[ |\partial C_{i+1}|+|\partial R_{j'+1}| \ge |\partial(C_{i+1}\cap R_{j'+1})|+|\partial(C_{i+1}\cup R_{j'+1})| .\]
Since $\partial (C_{i+1}\cap R_{j'+1})$ is a $(T_{i+1}\cap T'_{j'+1}, T\setminus (T_{i+1}\cap T'_{j'+1}))$-cut, its value is at least
\[ \emincut_G(T_{i+1}\cap T'_{j'+1}, T\setminus (T_{i+1}\cap T'_{j'+1})) = (i+1)+(j'+1) .\]
Similarly, $|\partial(C_{i+1}\cup R_{j'+1})|\ge k-(i+1)+k-(j'+1)$. It follows that
\begin{align*}
 k+k= |\partial C_{i+1}|+|\partial R_{j'+1}| &\ge |\partial(C_{i+1}\cap R_{j'+1})|+|\partial(C_{i+1}\cup R_{j'+1})| \\&\ge (i+1)+(j'+1) +k-(i+1)+k-(j'+1) \\&= k+k, 
\end{align*}
so the inequality must be an equality. It follows that there are no edges between $C_{i+1}\setminus R_{j'+1}$ and $R_{j'+1}\setminus C_{i+1}$, since those edges would make the submodularity inequality strict. In particular, there are no edges between $S_{i,j}$ and $S_{i',j'}$.

Finally, the $i<i'$ and $j<j'$ assumptions can be removed essentially by symmetry, replacing $C_{i+1}$ by $V'\setminus C_{i+1}$ or $R_{j'+1}$ by $V'\setminus R_{j'+1}$ (or both).
\end{proof}

\begin{figure}[H]\centering
\begin{subfigure}[t]{0.45\textwidth}\centering
\includegraphics[scale=1.0]{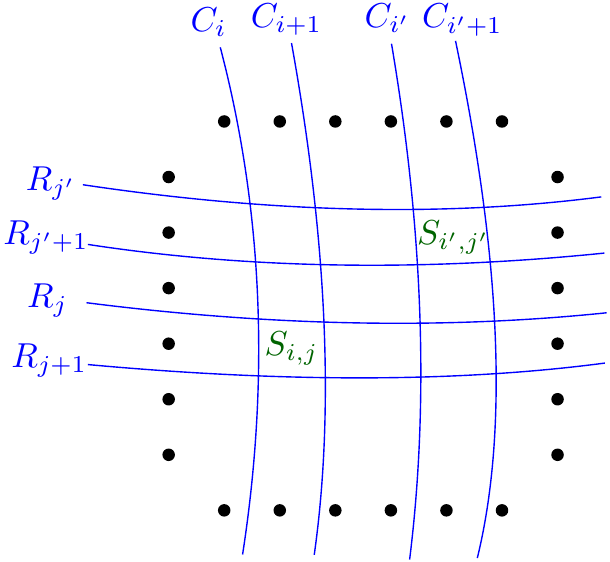}
\caption{The setting for the proof of \Cref{clm:3}.}\label{fig:grid3}
\begin{minipage}{0.1cm}
\vfill
\end{minipage}
\end{subfigure}
\hfill
\begin{subfigure}[t]{0.45\textwidth}\centering
\includegraphics[scale=1.0]{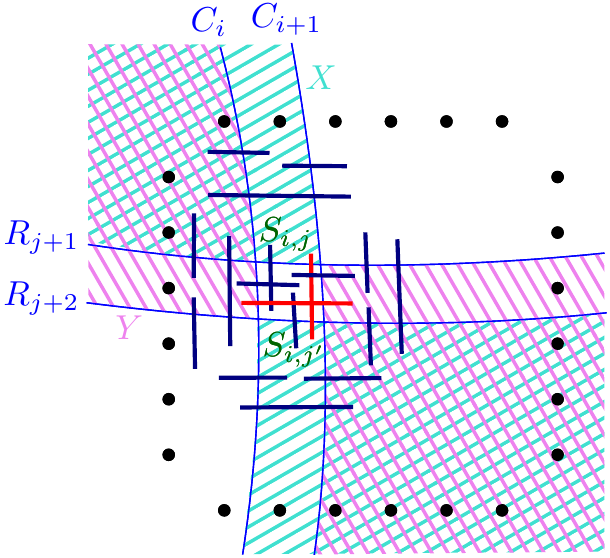}
\caption{The definitions of $X$ and $Y$, and the different types of edges that are allowed by \Cref{clm:3} and leave or enter the sets $R_{j+1}$, $R_{j+2}$, $C_i$, $C_{i+1}$. 
The long edges marked in red make the inequality $|\partial_H X| + |\partial_H Y| \le |\partial_H R_{j+1}| + |\partial_H R_{j+2}| + |\partial_H C_{i}| + |\partial_H C_{i+1}|$ strict.}\label{fig:grid5}
\end{subfigure}

\begin{subfigure}[b]{0.45\textwidth} \centering
\includegraphics[scale=1.0]{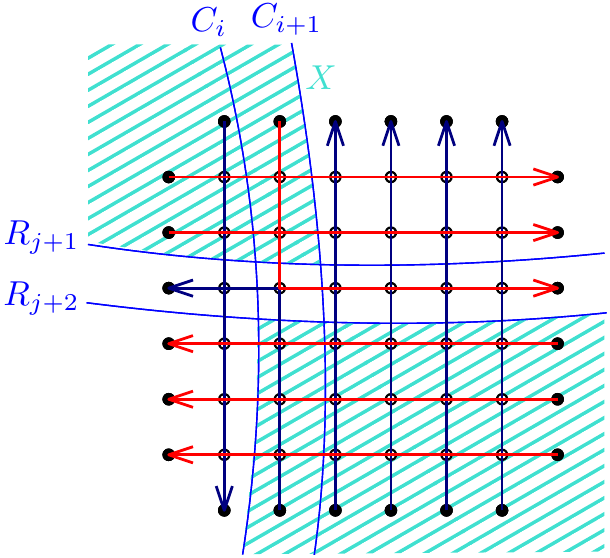}
\caption{The cut and flow that establishes that $\emincut_G(X\cap T,T\setminus X)=2k$. The flow is drawn in red and dark blue.}\label{fig:grid6}
\end{subfigure}
\hfill
\begin{subfigure}[b]{0.45\textwidth}  \centering
\includegraphics[scale=1.0]{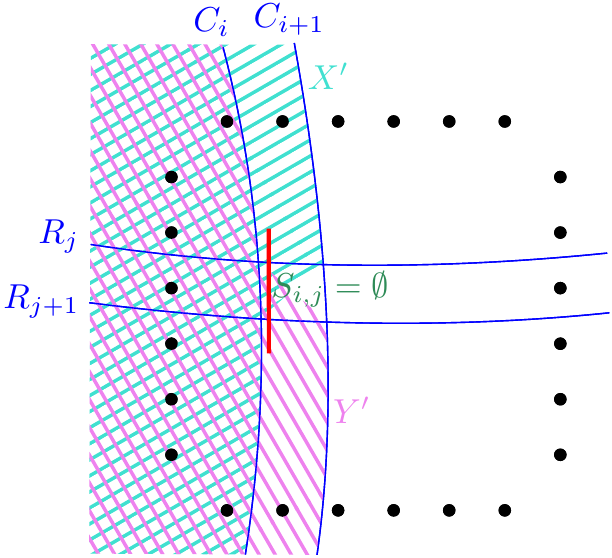}
\caption{The definitions of $X'$ and $Y'$. The red edge makes the submodularity inequality strict.}\label{fig:grid7}
\end{subfigure}
\caption{Figures for Lemma~\ref{lem:1}}
\label{fig:combo}
\end{figure}
\begin{claim}\label{clm:4}
There are no edges between $S_{i,j}$ and $S_{i,j'}$ for $|j-j'|\ge2$. Similarly, there are no edges between $S_{i,j}$ and $S_{i',j}$ for $|i-i'|\ge2$.
\end{claim}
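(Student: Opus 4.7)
The claim has two symmetric parts: ruling out edges between $S_{i,j}$ and $S_{i,j'}$ for $|j-j'|\ge 2$, and the analogous statement for $|i-i'|\ge 2$. The plan is to prove the first (assuming without loss of generality $j'\ge j+2$) by a submodular argument involving all four cuts $C_i,C_{i+1},R_{j+1},R_{j+2}$, each of which has boundary exactly $k$ in $H$; the second part follows by interchanging the roles of rows and columns throughout, using the analogous sets $X',Y'$ built from $C_i,C_{i+1},C_{i'},C_{i'+1},R_j,R_{j+1}$.

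The key construction is to set
\[
X=(C_{i+1}\cap R_{j+1})\uplus\bigl(V'\setminus(C_i\cup R_{j+2})\bigr), \qquad
Y=(C_i\cap R_{j+2})\uplus\bigl(V'\setminus(C_{i+1}\cup R_{j+1})\bigr),
\]
where each union is disjoint thanks to the nestings $R_{j+1}\subseteq R_{j+2}$ and $C_i\subseteq C_{i+1}$. Writing $X=A\uplus B_X$ and $Y=A'\uplus B_Y$, two applications of submodularity on the pairs $(C_{i+1},R_{j+1})$ and $(C_i,R_{j+2})$ yield $|\partial_H A|+|\partial_H A'|+|\partial_H B_X|+|\partial_H B_Y|\le 4k$. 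Combining this with the disjoint-union bound $|\partial_H(P\uplus Q)|\le |\partial_H P|+|\partial_H Q|-|E_H(P,Q)|$ gives
\[
|\partial_H X|+|\partial_H Y|\le 4k-|E_H(A,B_X)|-|E_H(A',B_Y)|.
\]
A hypothetical edge between $u\in S_{i,j}$ and $v\in S_{i,j'}$ lies in $E_H(A,B_X)$: $u\in C_{i+1}\cap R_{j+1}=A$, while $v\in C_{i+1}\setminus C_i$ with $v\notin R_{j'}\supseteq R_{j+2}$, placing $v$ in $B_X$. Hence any such edge sharpens the bound to $|\partial_H X|+|\partial_H Y|<4k$.

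For the matching lower bound I will verify $\emincut_G(X\cap T,T\setminus X)=\emincut_G(Y\cap T,T\setminus Y)=2k$. A direct count gives $|X\cap T|=|Y\cap T|=2k$; the terminal partition for $X$ is the ``top-left plus bottom-right'' pattern consisting of the first $j+1$ left terminals, the first $i+1$ top terminals, the last $k-i$ bottom terminals, and the last $k-j-2$ right terminals, with $Y$ symmetric. To certify the min-cut value, I exhibit $2k$ edge-disjoint paths in $G$: a horizontal row path for each row $r\ne j+2$, a vertical column path for each column $c\ne i+1$, and two L-shaped detours pairing $\text{top}[i+1]$ with $\text{left}[j+2]$ and $\text{bottom}[i+1]$ with $\text{right}[j+2]$, each using one half of the previously unused row $j+2$ and column $i+1$ and meeting only at the grid vertex $(j+2,i+1)$. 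Since $H$ is a sparsifier, this gives $|\partial_H X|,|\partial_H Y|\ge 2k$, and hence $|\partial_H X|+|\partial_H Y|\ge 4k$. Combined with the upper bound, $|E_H(A,B_X)|=0$, ruling out the hypothetical edge.

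The main obstacle I anticipate is verifying the edge-disjointness of the $2k$ paths in the lower-bound step. The off-diagonal terminal partition leaves four ``leftover'' leaves ($\text{top}[i+1]$, $\text{bottom}[i+1]$, $\text{left}[j+2]$, $\text{right}[j+2]$) unpaired by the straight row and column paths; these must be routed through the single row and column untouched by any straight path, and the two L-shaped detours must split row $j+2$ and column $i+1$ at their common vertex $(j+2,i+1)$ without overlapping in either row or column edges. The terminal imbalance at indices $j+2$ and $i+1$ makes exactly this split possible.
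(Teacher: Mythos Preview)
Your proposal is correct. It uses the same sets $X=(C_{i+1}\cap R_{j+1})\cup\bigl((V'\setminus C_i)\cap(V'\setminus R_{j+2})\bigr)$ and $Y$ as the paper, and the same lower bound $\emincut_G(X\cap T,T\setminus X)=2k$ (your explicit $2k$-flow is exactly what the paper's figure depicts). Where you differ is in how you obtain the upper bound $|\partial_H X|+|\partial_H Y|\le 4k$ with strict inequality in the presence of a bad edge. The paper proves this by a direct enumeration of eighteen edge types crossing the four cuts $C_i,C_{i+1},R_{j+1},R_{j+2}$ and checking each one's contribution to both sides; the ``red'' edges (its types 15 and 18) are the ones that make the inequality strict. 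You instead apply submodularity twice, once to $(C_{i+1},R_{j+1})$ and once to $(C_i,R_{j+2})$, to bound $|\partial A|+|\partial A'|+|\partial B_X|+|\partial B_Y|\le 4k$, and then use the disjoint-union identity $|\partial(P\uplus Q)|=|\partial P|+|\partial Q|-2|E(P,Q)|$ to recover $|\partial X|+|\partial Y|$. Your route is cleaner and avoids the case analysis entirely; the paper's route has the minor advantage that it simultaneously identifies \emph{all} edge types forcing strictness (not just those in $E(A,B_X)$), which it then reuses to dispatch the second statement by noting that edges between $S_{i,j}$ and $S_{i',j}$ are exactly the other red type already excluded. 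Your symmetry argument for the second statement is of course equally valid.

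Two small remarks: your disjoint-union inequality should read $-2|E_H(P,Q)|$ rather than $-|E_H(P,Q)|$ (though the weaker version you wrote still suffices), and your parenthetical list ``$C_i,C_{i+1},C_{i'},C_{i'+1},R_j,R_{j+1}$'' for the symmetric case is garbled; just say the argument goes through verbatim with the roles of $C$ and $R$ interchanged.
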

\begin{proof}
We first prove the inequality for $S_{i,j}$ and $S_{i,j'}$ for $j'\ge j+2$.
As shown in Figure~\ref{fig:grid5}, define $X=(R_{j+1}\cap C_{i+1})\cup((V\setminus R_{j+2})\cap(V\setminus C_i))$ and $Y=(R_{j+2}\cap C_i)\cup((V'\setminus R_{j+1})\cap(V'\setminus C_{i+1})$. We now claim the inequality
\begin{gather} |\partial_H X| + |\partial_H Y| \le |\partial_H R_{j+1}| + |\partial_H R_{j+2}| + |\partial_H C_{i}| + |\partial_H C_{i+1}| \label{eqn:1}\end{gather}
by examining each type of edge in Figure~\ref{fig:grid5} and its contribution to both sides of the inequality:
 \begin{enumerate}
 \item Each edge between $C_{i}\cap R_{j+1}$ and $(C_{i+1}\sm C_i)\cap R_{j+1}$ contributes $1$ to $|\partial_HY|$ and $1$ to $|\partial_HC_{i}|$.
 \item Each edge between $C_{i+1}\cap R_{j+1}$ and $(V\sm C_{i+1})\cap R_{j+1}$ contributes $1$ to $|\partial_HX|$ and $1$ to $|\partial_HC_{i+1}|$.
 \item Each edge between $C_{i}\cap R_{j+1}$ and $(V\sm C_{i+1})\cap R_{j+1}$ contributes $1$ to $|\partial_HX|$ and $|\partial_HY|$ and $1$ to $|\partial_HC_{i}|$ and $|\partial_HC_{i+1}|$.
 \item Each edge between $C_{i}\cap R_{j+1}$ and $C_i\cap(R_{j+2}\sm R_{j+1})$ contributes $1$ to $|\partial_HX|$ and $1$ to $|\partial_HR_{j+1}|$.
 \item Each edge between $C_{i}\cap R_{j+2}$ and $C_i\cap(V\sm R_{j+2})$ contributes $1$ to $|\partial_HY|$ and $1$ to $|\partial_HR_{j+1}|$.
 \item Each edge between $C_{i}\cap R_{j+1}$ and $C_i\cap(V\sm R_{j+2})$ contributes $1$ to $|\partial_HX|$ and $|\partial_HY|$ and $1$ to $|\partial_HR_{j+1}|$ and $|\partial_HR_{j+2}|$.
 \item Each edge between $C_{i}\cap(V\sm R_{j+2})$ and $(C_{i+1}\sm C_i)\cap(V\sm R_{j+2})$ contributes $1$ to $|\partial_HX|$ and $1$ to $|\partial_HC_{i}|$.
 \item Each edge between $C_{i+1}\cap(V\sm R_{j+2})$ and $(V\sm C_{i+1})\cap(V\sm R_{j+2})$ contributes $1$ to $|\partial_HY|$ and $1$ to $|\partial_HC_{i+1}|$.
 \item Each edge between $C_{i}\cap(V\sm R_{j+2})$ and $(V\sm C_{i+1})\cap(V\sm R_{j+2})$ contributes $1$ to $|\partial_HX|$ and $|\partial_HY|$ and $1$ to $|\partial_HC_{i}|$ and $|\partial_HC_{i+1}|$.
 \item Each edge between $(V\sm C_{i+1})\cap R_{j+1}$ and $(V\sm C_{i+1})\cap(R_{j+2}\sm R_{j+1})$ contributes $1$ to $|\partial_HY|$ and $1$ to $|\partial_HR_{j+1}|$.
 \item Each edge between $(V\sm C_{i+1})\cap R_{j+2}$ and $(V\sm C_{i+1})\cap(V\sm R_{j+2})$ contributes $1$ to $|\partial_HX|$ and $1$ to $|\partial_HR_{j+1}|$.
 \item Each edge between $(V\sm C_{i+1})\cap R_{j+1}$ and $(V\sm C_{i+1})\cap(V\sm R_{j+2})$ contributes $1$ to $|\partial_HX|$ and $|\partial_HY|$ and $1$ to $|\partial_HR_{j+1}|$ and $|\partial_HR_{j+2}|$.
 \item Each edge between $C_{i}\cap(R_{j+2}\sm R_{j+1})$ and $(C_{i+1}\sm C_i)\cap(R_{j+2}\sm R_{j+1})$ contributes $1$ to $|\partial_HY|$ and $1$ to $|\partial_HC_i|$.
 \item Each edge between $(C_{i+1}\sm C_i)\cap(R_{j+2}\sm R_{j+1})$ and $(V\sm C_{i+1})\cap(R_{j+2}\sm R_{j+1})$ contributes $1$ to $|\partial_HY|$ and $1$ to $|\partial_HC_{i+1}|$.
 \item Each edge between $ C_i\cap(R_{j+2}\sm R_{j+1})$ and $(V\sm C_{i+1})\cap(R_{j+2}\sm R_{j+1})$ contributes $1$ to $|\partial_HC_i|$ and $1$ to $|\partial_HC_{i+1}|$.
 \item Each edge between $(C_{i+1}\sm C_i)\cap R_{j+1}$ and $(C_{i+1}\sm C_i)\cap(R_{j+2}\sm R_{j+1})$ contributes $1$ to $|\partial_HX|$ and $1$ to $|\partial_HR_{j+1}|$.
 \item Each edge between $(C_{i+1}\sm C_i)\cap (R_{j+2}\sm R_{j+1})$ and $(C_{i+1}\sm C_i)\cap(V\sm R_{j+2})$ contributes $1$ to $|\partial_HX|$ and $1$ to $|\partial_HR_{j+2}|$.
 \item Each edge between $(C_{i+1}\sm C_i)\cap  R_{j+1}$ and $(C_{i+1}\sm C_i)\cap(V\sm R_{j+2})$ contributes $1$ to $|\partial_HR_{j+1}|$ and $1$ to $|\partial_HR_{j+2}|$.
 \end{enumerate}
All of the above types of edges contribute the same to both sides of (\ref{eqn:1}) except those of type~15 and~18, namely those indicated by long red edges in Figure~\ref{fig:grid5}. We call such edges \emph{red}. Then, the inequality (\ref{eqn:1}) is strict if any only if red edges are present.

Since all edges between $S_{i,j}$ and $S_{i,j'}$ are red, it suffices to show that (\ref{eqn:1}) is actually an equality, which would exclude all red edges and hence all edges  between $S_{i,j}$ and $S_{i,j'}$ as well.
Observe that $\emincut_G(X\cap T,T\setminus X)=2k$, as seen in Figure~\ref{fig:grid6}, which shows a cut and a flow both of value $2k$. Similarly, $\emincut_G(Y\cap T,T\setminus Y)=2k$. Since $H$ is a sparsifier of $G$, we must have
\begin{align*}
 2k+2k = |\dvec\partial_HX| + |\dvec\partial_HY| &\le |\dvec\partial_HR_{j+1}| + |\dvec\partial_HR_{j+2}| + |\dvec\partial_HC_{i}| + |\dvec\partial_HC_{i+1}| \\&= k+k+k+k .
\end{align*}
In other words, the inequality is tight, as desired.

The case for $S_{i,j}$ and $S_{i',j}$ for $i'\ge i+2$ is similar. Note that edges between $S_{i,j}$ and $S_{i',j}$ correspond to the red horizontal edges in Figure~\ref{fig:grid5} (for different values of $i,j$), which we have already shown do not exist.
\end{proof}

\begin{claim}
For each $1\le i,j\le k$, we have $S_{i,j}\ne\emptyset$.
\end{claim}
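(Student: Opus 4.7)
The plan is to argue by contradiction. Suppose $S_{i,j}=\emptyset$, and consider the cut $X=C_{i+1}\cap R_{j+1}$ in $H$. Since $X$ separates the $i+j+2$ leaf terminals $T_{i+1}\cap T'_{j+1}=\{l_1,\ldots,l_{j+1}\}\cup\{t_1,\ldots,t_{i+1}\}$ from the remaining terminals, and the corresponding mincut in $G$ (cutting the $i+j+2$ leaves) is exactly $i+j+2$, the sparsifier property yields $|\partial_H X|\ge i+j+2$. Combining this with submodularity applied to $C_{i+1}$ and $R_{j+1}$ (both of cut value $k$, with union $C_{i+1}\cup R_{j+1}$ having $G$-mincut $2k-i-j-2$ and intersection $X$ of $G$-mincut $i+j+2$), I would deduce $|\partial_H X|=i+j+2$ exactly. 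By the same reasoning, all "rectangular" cuts $|\partial_H(C_{i'}\cap R_{j'})|$ equal their $G$-mincut $i'+j'$, and the analogous submodular inequalities are all tight.

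Next, I would invoke Claims~\ref{clm:3} and~\ref{clm:4} to restrict the edges of $\partial_H X$. Every such edge lies between grid-adjacent cells and so must go through one of $j+1$ east positions $(i,b)\leftrightarrow(i+1,b)$ for $b\le j$, or one of $i+1$ south positions $(a,j)\leftrightarrow(a,j+1)$ for $a\le i$, plus contributions involving the boundary cells in $C_0$, $V'\setminus C_k$, $R_0$, $V'\setminus R_k$. The hypothesis $S_{i,j}=\emptyset$ eliminates both corner positions $(i,j)\leftrightarrow(i+1,j)$ and $(i,j)\leftrightarrow(i,j+1)$, since each needs a vertex of $S_{i,j}$ as an endpoint.

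To convert this structural deficiency into a contradiction, I plan to apply submodularity to $A=C_i\cap R_{j+1}$ and $B=C_{i+1}\cap R_j$, both of cut value $i+j+1$. Under the hypothesis, $A\cup B=X$ and $A\cap B=C_i\cap R_j$ of cut value $i+j$, so the tight inequality $(i+j+1)+(i+j+1)\ge(i+j)+(i+j+2)$ forces no edges between $A\setminus B=C_i\cap(R_{j+1}\setminus R_j)$ and $B\setminus A=(C_{i+1}\setminus C_i)\cap R_j$. Together with the missing corner, this "no-edge" constraint should leave insufficient position capacity in $\partial_H X$ to carry the required $i+j+2$ edges; making this precise may require invoking Menger's theorem to exhibit $i+j+2$ edge-disjoint paths from $T_{i+1}\cap T'_{j+1}$ to its complement, each of which must use one of the now-restricted positions, yielding the contradiction.

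The main obstacle is handling the boundary cells, which are not directly constrained by Claims~\ref{clm:3} and~\ref{clm:4} and could in principle host "bypass" edges compensating for the missing corner. I expect that ruling these out will require auxiliary submodularity arguments involving $C_0$ and $V'\setminus C_k$ (and the analogous $R$-boundary sets) to extend the grid-adjacency constraint to edges incident to boundary cells; alternatively, a direct routing argument that tracks $k$ edge-disjoint $\{l_a\}$-$\{r_a\}$ paths may be needed to show the row-$(j{+}1)$ path cannot circumvent the empty cell $S_{i,j}$ without colliding with another path.
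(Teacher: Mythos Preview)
Your proposal establishes correctly that all rectangular cuts satisfy $|\partial_H(C_{i'}\cap R_{j'})|=i'+j'$, but the contradiction you sketch afterward has a genuine gap. The submodularity you apply to $A=C_i\cap R_{j+1}$ and $B=C_{i+1}\cap R_j$ only rules out edges between $A\setminus B=C_i\cap(R_{j+1}\setminus R_j)$ and $B\setminus A=(C_{i+1}\setminus C_i)\cap R_j$; those sets lie in distinct rows \emph{and} distinct columns, so this is already covered by Claim~\ref{clm:3} and buys nothing new. More seriously, the ``position capacity'' counting is not a valid bound as stated: nothing prevents $\partial_H X$ from placing several edges at a single cell adjacency, so losing the two corner positions does not by itself force $|\partial_H X|<i+j+2$. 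A Menger-based routing argument inside $H$ would require much finer control on the cell decomposition---including the boundary cells you explicitly flag as unresolved---than Claims~\ref{clm:3} and~\ref{clm:4} supply.

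The paper sidesteps all of this with one submodularity step along the \emph{column strip} $C_{i+1}\setminus C_i$ rather than around a rectangle. Assuming $S_{i,j}=\emptyset$, set $X'=C_i\cup(C_{i+1}\cap R_j)$ and $Y'=C_i\cup(C_{i+1}\setminus R_{j+1})$. Then $X'\cap Y'=C_i$ and $X'\cup Y'=C_{i+1}$, the latter precisely because the piece of $C_{i+1}\setminus C_i$ not covered by $R_j$ or $V'\setminus R_{j+1}$ is $S_{i,j}$. Submodularity gives
\[
|\partial_HX'|+|\partial_HY'|\ \ge\ |\partial_HC_i|+|\partial_HC_{i+1}|\ =\ 2k,
\]
and this is tight since the only edges that could make it strict run between $(C_{i+1}\setminus C_i)\cap R_j$ and $(C_{i+1}\setminus C_i)\setminus R_{j+1}$, which are excluded by Claim~\ref{clm:4}. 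But $X'\cap T$ is just $T_i$ together with one extra top terminal, so $\emincut_G(X'\cap T,T\setminus X')=k+1$, and symmetrically for $Y'$; the sparsifier property then forces $|\partial_HX'|+|\partial_HY'|\ge 2k+2$, the desired contradiction.
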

\begin{proof}
Suppose for contradiction that $S_{i,j}=\emptyset$ for some $1\le i,j\le k$. As shown in Figure~\ref{fig:grid7}, define the sets $X'=C_i\cup (C_{i+1}\cap R_j)$ and $Y'=C_i\cup(C_{i+1}\setminus R_{j+1})$. We have $X'\cap Y'=C_i$, and since $S_{i,j}=\emptyset$ by assumption, we also have $X'\cup Y'=C_{i+1}$. By submodularity,
\[ |\dvec\partial_HX'| + |\dvec\partial_HY'| \ge |\dvec\partial_H(X'\cap Y')| + |\dvec\partial_H(X'\cup Y')| =  |\dvec\partial_HC_i| + |\dvec\partial_HC_{i+1}|. \]
Moreover, the inequality is tight because the only types of edges that can make the inequality strict (marked red in Figure~\ref{fig:grid7}) are prohibited by \Cref{clm:4}. Since $H$ is a sparsifier of $G$,
\[ \emincut_G(X'\cap T,T\setminus X')+\emincut_G(Y'\cap T,T\setminus Y') \le |\dvec\partial_HX'| + |\dvec\partial_HY'| =  |\dvec\partial_HC_i| + |\dvec\partial_HC_{i+1}|= 2k.\]
However, it is not hard to see that $\emincut_G(X'\cap T,T\setminus X')=\emincut_G(Y'\cap T,T\setminus Y')=k+1$, a contradiction.
\end{proof}

\section{Conclusions}
We showed that every unweighted, directed acyclic graph $G$ with $k$ terminals
admits a vertex cut sparsifier $H$ with $O(k^2)$ vertices, assuming that the
terminals are deletable.  This improves the previous result by \KW of $O(k^3)$ 
vertices, for general directed graphs~\cite{KratschW20}.  Furthermore, the
sparsifier can be computed in near-linear time in the size of $G$, specifically
in time $O((m+n)k^{O(1)})$ plus $O((m+n)k^{O(1)})$ field operations over a finite field with 
entries of bitlength $O(k \log n)$, where $n=|V(G)|$ and $m=|E(G)|$.  This improves over previous work~\cite{KratschW20},
whose time complexity was not explicitly given but is at least $O(n^{\omega+1})$
due to the repeated construction of a representation of a gammoid.

Furthermore, we showed that $\Omega(k^2)$ vertices in a sparsifier may be required, both for
vertex cuts in DAGs and for the seemingly simpler setting of undirected edge cuts.
However, we leave it open whether such a bound applies to the mixed setting, where
we want to preserve undirected edge cuts in the input graph $G$ but allow
the sparsifier to be a directed graph. 

More importantly, we leave open the question of whether vertex cut
sparsifiers of $O(k^2)$ vertices exist for general directed graphs.
We conjecture that $O(k^2)$ is the correct bound for general directed graphs,
but we were not able to find a proof.


\bibliographystyle{plainurl}
\bibliography{bib}


\section*{Appendix}
\paragraph*{Proof of Lemma~\ref{lem:quick-repr}}
\begin{lemma*}
Given a directed acyclic graph $G = (V, E)$, a set $S\subseteq V$, and $\varepsilon > 0$, with $|V|=n$, $|E|=m$ and $|S|=k$, a representation of the gammoid on $S$ of dimension $k$ over a finite field with entries of bit length $\ell=O(k \log n + \log(1/\varepsilon))$ can be constructed in randomized time $\tilde O((n+m) k\ell)$ with one-sided error at most $\varepsilon$, where $\tilde O$ hides factors logarithmic in $\ell$. 
\end{lemma*}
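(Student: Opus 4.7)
The plan is to realise the gammoid via Mason's path-counting representation, which sidesteps the $n\times n$ matrix inversion required by the standard transversal-matroid route. Enumerate $S=\{s_1,\ldots,s_k\}$, attach an indeterminate $x_e$ to each edge $e\in E$, and, for each $i\in[k]$ and $v\in V$, define the path-generating polynomial $p_i(v)=\sum_{P: s_i\to v}\prod_{e\in P}x_e$; this is a finite sum since $G$ is a DAG. Form the $k\times n$ matrix $M(x)$ whose column indexed by $v$ is $(p_1(v),\ldots,p_k(v))^T$. The claim is that after substituting random field elements for the $x_e$, the columns of $M$ give the desired representation of the gammoid on $S$.

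For correctness I would use the Lindstr\"om--Gessel--Viennot (LGV) lemma. Fix $T\subseteq V$ with $|T|=t\le k$ and any $S'\subseteq S$ with $|S'|=t$. The determinant of the $t\times t$ submatrix $N_{S',T}(x)$ expands by LGV as a signed sum, over bijections $\sigma: S'\to T$, of edge-weighted non-crossing (vertex-disjoint) systems of paths from each $s\in S'$ to $\sigma(s)\in T$. The crucial observation is that any such vertex-disjoint path system is uniquely determined by its edge set (one traces out the paths from the source endpoints, using that every vertex lies on at most one path), so distinct systems contribute distinct monomials and no cancellation occurs across different $\sigma$. Hence $\det N_{S',T}(x)$ is a nonzero polynomial iff some vertex-disjoint system from $S'$ to $T$ exists, and, maximising over $S'$, the columns of $M(x)$ indexed by $T$ have rank $|T|$ over $\FF(x)$ exactly when $T$ is independent in the gammoid.

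To pass to a finite field, substitute uniformly random $r_e\in\FF$. Each $p_i(v)$ has total degree at most $n$, so each $t\times t$ determinant has degree at most $kn$. By Schwartz--Zippel and a union bound over the at most $n^k$ independent sets (for each, fix a witness $S'_T$ realising a nonzero determinant polynomial), the representation fails with probability at most $n^k\cdot kn/|\FF|$, which is at most $\varepsilon$ whenever $|\FF|=\Theta(kn^{k+1}/\varepsilon)$; this gives bit length $\ell=O(k\log n+\log(1/\varepsilon))$. The error is one-sided because the determinants associated to gammoid-dependent sets are identically zero polynomials, and thus remain zero after any substitution.

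For the running time, and to avoid polynomial coefficient blowup, the substitution is carried out \emph{before} computing $M$: sample the $r_e$ in $\tilde{O}(m\ell)$ time, topologically sort $V$ in $O(n+m)$, and for each $i\in[k]$ evaluate the recurrence $p_i(s_i)=1$ and $p_i(v)=\sum_{(u,v)\in E}r_{(u,v)}\,p_i(u)$ in topological order using $O(m)$ field operations. This totals $O(km)$ field operations at $\tilde{O}(\ell)$ time each, giving the claimed $\tilde{O}((n+m)k\ell)$ bound. The main conceptual obstacle is the LGV no-cancellation step applied to vertex-disjoint (rather than merely edge-disjoint) path systems; once that is settled, Schwartz--Zippel together with the DAG dynamic program handle the rest in routine fashion.
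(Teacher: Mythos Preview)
Your proposal is correct and follows the same overall strategy as the paper: Mason's path-polynomial matrix, Schwartz--Zippel over a field of size $\Theta(kn^{k+1}/\varepsilon)$ with a union bound over the at most $n^k$ bases, and evaluation by the recurrence $p_i(v)=\sum_{(u,v)\in E}x_{uv}\,p_i(u)$ along a topological order. The one substantive difference is how correctness of the symbolic representation is argued. The paper treats the two directions separately: if $T$ is a basis it substitutes $x_e\in\{0,1\}$ according to a witnessing flow so that $M[S,T]$ becomes a permutation matrix, hence $\det M[S,T]\not\equiv 0$; if $T$ is dependent it factors $M[S,T]=M[S,C]\cdot M'[C,T]$ through an $(S,T)$-min cut $C$ with $|C|<|T|$, bounding the rank. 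Your LGV route, together with the observation that a vertex-disjoint path system is recoverable from its edge multiset, handles both directions at once; it is conceptually tidy but leans on a named lemma where the paper's argument is self-contained and arguably more elementary. The field-size calculation and running-time analysis are identical in both.
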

\begin{proof}
We review the construction of Mason~\cite{mason72gammoids}. Associate a variable
$x_{uv}$ to every edge $(u,v) \in E$, where all variables $x_{uv}$
are formally independent. For two vertices $u, v \in V$,
define the path polynomial
\[
  P(u,v) = \sum_{Q \colon u \leadsto v} \prod_{(u,v) \in E(Q)} x_{uv}
\]
where $Q$ ranges over all directed paths from $u$ to $v$ in $G$.
Define the matrix $M$ with rows indexed by $S$ and columns by $V$,
where for $u \in S$, $v \in V$ we have $M(u,v)=P(u,v)$.
We claim that $M$ is a representation of the gammoid on $S$. 

Indeed, on the one hand, let $T \subseteq V$ be a basis of the gammoid.
By definition, there is a vertex-disjoint flow linking $S$ to $T$.
Instantiate the variables $x_{uv}$ by letting $x_{uv}=1$ for every edge
used in one of these paths, and $x_{uv}=0$ otherwise.  Under this
evaluation $M[S,T]$ is a permutation matrix, hence independent,
which shows that $\det M[S,T] \not \equiv 0$. On the other hand, 
let $T \subseteq V$ and let $C \subseteq V$ be an $(S,T)$-min cut,
$|C|<|T|$. Then $M[S,T]$ factors as $M[S,T] = M[S,C] \cdot M'[C,T]$ for a matrix $M'[C,T]$, hence the rank of
$M[S,T]$ is at most $|C|$.   Here, $M'[C,T]$ is defined as $M$, except that vertices of $C$ have been turned into sources.

To get a representation over a finite field $\mathbb{F}$, we pick a
sufficiently large field $\mathbb{F}$ and replace every variable $x_{uv}$ by
a random value from $\mathbb{F}$.  For the success probability, note that
any dependent set in $M$ remains dependent after such a replacement. 
Therefore, it is enough to consider the probability that
$\det M[S,B] \neq 0$ for every basis $B$ of the gammoid. 
For this, we observe that the entries $M(u,v)$ are polynomials of
degree at most $n$, hence $\det M[S,B]$ has degree at most $nk$.
Furthermore, the number of bases is at most $\binom{n}{k} \leq n^k$.
Let $\mathbb{F}=GF(2^\ell)$ where $2^\ell > (1/\varepsilon)(nk)n^k$, i.e., $\ell = \Theta(k \log n + \log (1/\varepsilon))$. 
By Schwartz-Zippel, the probability that $\det M[S,B]=0$ for a given
basis $B$ is at most $nk/|\mathbb{F}| \leq \varepsilon n^{-k}$, hence by the
union bound the probability that this occurs for at least one basis $B$
is at most $\varepsilon$. We note that field arithmetic over $\mathbb{F}$ can
be performed in time $\tilde O(\ell)$.

It remains to evaluate the vectors $R_v=(P(s,v))_{s \in S}$ for $v \in V$
quickly. For simplicity, we assume w.l.o.g.\ that the vertices $s \in S$ 
are sources in $G$; this can be achieved by introducing a new vertex
$s'$ for every $s \in S$, with a single edge $(s',s)$, and
replacing $s$ by $s'$ in $S$.  Note that this does not change the
resulting gammoid.  Let $V=\{v_1,\ldots,v_n\}$ where $(v_1, \ldots, v_n)$ 
is a topological ordering of $G$, starting with the vertices of $S$;
this can be computed in time $O(m+n)$ by standard methods.  Note
that $P(s,s)=1$ for $s \in S$ and $P(s, \overline{s}) = 0$ for $s\ne \overline{s}\in S$, hence the vectors $R_s$, $s \in S$ 
are unit vectors making up the standard basis for $\mathbb{F}^k$. For all
other vertices $v \in V$, note
\[
  P(s,v) = \sum_{u \in N^-(v)} P(s,u)x_{uv},
\]
where $P(s,u)=R_u(s)$ has already been computed due to the
topological ordering. Hence $R_v$ can be computed using
$O(kd^-(v))$ field operations from the previously computed vectors. 
Performing this across all variables $v_1, \ldots, v_n$ uses $O((n+m)k)$
field operations, hence the total running time is bounded by
$\tilde O((n+m)k \ell)$, as stated.

Finally, we note that a similar representation result can be shown for
general digraphs, if the paths in the path polynomial are replaced by
walks.  However, in a general digraph there is no known way of
evaluating the result faster than matrix multiplication time, so this
does not lead to any algorithmic improvements.
\end{proof}

\paragraph*{Proof of Lemma~\ref{lem:ess}}
\begin{lemma*}
Let $v$ be an essential vertex with respect to $X\subseteq S, Y\subseteq T$. Let $C$ be any minimum vertex cut between $X, Y$. Then $(C, v)$ is saturated by both $X$ and $Y$.
\end{lemma*}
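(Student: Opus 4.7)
}

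The plan is to prove saturation by $X$ via a graph-duplication trick combined with Menger's theorem; saturation by $Y$ will follow by a symmetric argument in the reverse graph. Let $k = |C|$. By essentiality of $v$ with respect to $(X,Y)$, every minimum $(X,Y)$-cut contains $v$, so in particular $v \in C$.

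I would first reformulate saturation combinatorially. Construct an auxiliary graph $G'$ obtained from $G$ by adding a new vertex $v'$ whose in-neighborhood is exactly $N^-(v)$ (and with no out-edges). Then $(C,v)$ is saturated by $X$ if and only if there exist $k+1$ vertex-disjoint paths from $X$ to $C \cup \{v'\}$ in $G'$: given such a family, the two paths ending at $v$ and $v'$ correspond in $G$ to two paths both ending at $v$, and conversely two paths ending at $v$ in $G$ can be split by rerouting the last edge of one of them through $v'$.

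Next I would apply Menger's theorem to this $(X, C \cup \{v'\})$ connectivity question in $G'$. If saturation by $X$ fails, there exists an $(X, C \cup \{v'\})$-vertex-cut $D$ in $G'$ of size at most $k$, and we may take $D$ disjoint from $X \cup C \cup \{v'\}$ (so that, in particular, $v \notin D$ because $v \in C$, and $D \subseteq V(G)$ because $v'$ is the only new vertex). The main step is then to verify that this same $D$ is already an $(X,Y)$-cut in $G$, contradicting essentiality. Suppose for contradiction that there is an $X$-to-$Y$ path $P$ in $G \sm D$; since $C$ is an $(X,Y)$-cut, let $c$ be the first vertex of $C$ along $P$. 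If $c \ne v$, then the prefix of $P$ up to $c$ is an $X$-to-$C$ path in $G' \sm D$, contradicting that $D$ cuts $X$ from $C \cup \{v'\}$. If $c = v$, replace the final edge of this prefix, which enters $v$ from some vertex of $N^-(v)$, by the corresponding edge into $v'$; this gives an $X$-to-$v'$ path in $G' \sm D$ and again contradicts the cut property of $D$. So $D$ is an $(X,Y)$-cut in $G$ of size at most $k$ avoiding $v$, contradicting that $v$ belongs to every minimum $(X,Y)$-cut.

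The saturation-by-$Y$ claim is obtained by running the identical argument in the reverse graph $G^{\mathrm{rev}}$, with the roles of $X$ and $Y$ swapped and $v$ duplicated on its out-side: introduce a new vertex $v''$ whose out-neighborhood equals $N^+(v)$, and seek $k+1$ vertex-disjoint paths from $C \cup \{v''\}$ to $Y$. The only step I expect to require care is the case analysis at the vertex $c$ where an alleged escaping path $P$ first meets $C$, since that is where the equivalence between cuts in $G'$ and cuts in $G$ could break; handling the $c=v$ case by swapping the last edge through $v'$ (or the first edge through $v''$, in the other direction) is the key observation that makes the reduction to essentiality go through.
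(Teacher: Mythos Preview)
Your overall strategy---duplicate $v$, invoke Menger, and derive a small $(X,Y)$-cut avoiding $v$---is the right one and is close in spirit to the paper's argument. However, there is a genuine gap at the step where you assert ``we may take $D$ disjoint from $X \cup C \cup \{v'\}$.'' In this paper's convention, vertex cuts are allowed to contain terminals, so a minimum $(X, C\cup\{v'\})$-separator may well meet $X$ or $C$ or $\{v'\}$; in general there is \emph{no} separator of the same size disjoint from all of them (e.g.\ if $G$ consists only of the edges $x_1\to v$, $x_2\to v$ and one adds a sink-only copy $v'$, the only size-$2$ separators are $\{x_1,x_2\}$ and $\{v,v'\}$). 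Your later argument only needs the weaker facts $v'\notin D$ and $v\notin D$, but these already require a short case analysis: take $D$ minimum; if $v'\in D$ then $D\setminus\{v'\}$ would be an $(X,C)$-cut of size $<k$, impossible since $C$ is a min-$(X,Y)$-cut; and if $v'\notin D$ but $v\in D$, then by minimality some $X$-to-$(C\cup\{v'\})$ path hits $D$ only at $v$, and rerouting its last edge to $v'$ yields a path avoiding $D$ entirely. Without this, you have not excluded the possibility that the Menger cut you obtain already contains $v$, and then the final contradiction with essentiality does not fire.

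For comparison, the paper avoids this subtlety by duplicating $v$ with \emph{both} its in- and out-neighborhood and showing directly that $C\cup\{v'\}$ is a minimum $(X,Y)$-cut in the new graph, via a four-case analysis on whether $v,v'$ lie in a hypothetical smaller cut. That formulation has the side benefit of yielding saturation by $X$ and by $Y$ simultaneously from a single construction (via prefixes and suffixes of the $k{+}1$ disjoint $(X,Y)$-paths), whereas your sink-only and source-only copies require two separate runs.
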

\begin{proof}
We slightly modify our graph. Add a vertex $v'$ to $G$, and add edges such that $v'$ has the same in-neighbors and out-neighbors as $v$. Denote the new graph as $G'$, we show that $C' = C\cup \{v'\}$ is a min-cut between $X, Y$ in $G'$. This lemma then follows from Menger's Theorem.

Assume for the sake of contradiction that $C'$ is not a min-cut between $X, Y$, and let $D$ be a min-cut between $X, Y$ where $|D| < |C'| = |C| + 1$. We consider a few cases.
\begin{enumerate}[label = (\alph*)]
\item $v\in D$ and $v'\notin D$. Since $D$ is a min-cut, by Menger's theorem there exists a path $P$ from $X$ to $Y$ such $P\cap D = \{v\}$. If we replace $v$ by $v'$ in $P$, we obtain a valid path $P'$ from $X$ to $Y$ that does not cross $D$, which is a contradiction to the fact that $D$ is a cut.

\item $v'\in D$ and $v\notin D$. The same argument from the previous case works. 

\item $v, v' \notin D$. Then $D$ is a valid cut in $G$, and $|D|\le |C|$. Therefore $D$ is a min-cut in $G$ that does not contain $v$, which is a contradiction. 

\item $v, v' \in D$. Then consider $D' = D\setminus \{v'\}$. $D'$ is a valid cut in $G$, and $|D'| \le |C| - 1$, which contradicts the fact that $C$ is a min-cut in $G$.
\end{enumerate}
Therefore we conclude that $C'$ is a min-cut in $G'$, which completes the proof.
\end{proof}

\paragraph*{Proof of Lemma~\ref{lem:cl}}
\begin{lemma*}
If $v\in V\setminus(S\cup T)$ is not an essential vertex, then $\cl_v(G)$ is a vertex cut sparsifier of $G$. 
\end{lemma*}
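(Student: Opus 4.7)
The plan is to establish both directions of the identity $\mc_G(X,Y) = \mc_{\cl_v(G)}(X,Y)$ for every $X \subseteq S$ and $Y \subseteq T$, using the non-essentialness of $v$ only in the harder direction.

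First I would handle the direction $\mc_G(X,Y) \le \mc_{\cl_v(G)}(X,Y)$. Take any min-cut $C$ of $(X,Y)$ in $\cl_v(G)$; note $v \notin C$ since $v \notin V(\cl_v(G))$. I claim $C$ is also an $(X,Y)$-cut in $G$. Suppose not, and let $P$ be an $(X,Y)$-path in $G\sm C$. If $P$ avoids $v$, then $P$ is also a path in $\cl_v(G)\sm C$, contradicting that $C$ cuts $(X,Y)$ in $\cl_v(G)$. Otherwise $P$ contains a subpath $u \to v \to w$ where $u$ is an in-neighbor and $w$ an out-neighbor of $v$; replacing this by the shortcut edge $(u,w)$ (which exists in $\cl_v(G)$ by construction) yields a walk in $\cl_v(G)\sm C$ from $X$ to $Y$, and shortcutting gives a path, again a contradiction.

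Next I would handle the reverse direction $\mc_{\cl_v(G)}(X,Y) \le \mc_G(X,Y)$. This is where non-essentialness enters. Because $v$ is not essential, for the given pair $(X,Y)$ there exists a min-cut $C$ of $(X,Y)$ in $G$ with $v \notin C$. I claim this $C$ is also an $(X,Y)$-cut in $\cl_v(G)$. Suppose for contradiction that $P'$ is an $(X,Y)$-path in $\cl_v(G)\sm C$. Build a walk $P$ in $G$ by replacing each shortcut edge $(u,w)$ of $P'$ (one that is not already an edge of $G$) by the length-2 path $u\to v \to w$. Then $V(P) \subseteq V(P') \cup \{v\}$, and since $V(P') \cap C = \emptyset$ and $v \notin C$, the walk $P$ avoids $C$. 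Shortcutting $P$ to a simple path yields an $(X,Y)$-path in $G\sm C$, contradicting that $C$ is a cut in $G$. Hence $\mc_{\cl_v(G)}(X,Y) \le |C| = \mc_G(X,Y)$.

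The main conceptual obstacle is that the two graphs have different vertex sets, so one must be careful about what ``the same cut'' means and to verify that the translation of paths in each direction preserves the disjointness from the cut. The expansion step (replacing shortcut edges $(u,w)$ by $u\to v\to w$) may produce a walk rather than a simple path, but standard shortcutting of walks in digraphs resolves this, and the key point is that the only vertex newly introduced by the expansion is $v$ itself, which is excluded from $C$ precisely because we chose $C$ to avoid $v$ using non-essentialness. Combining both inequalities yields $\mc_G(X,Y) = \mc_{\cl_v(G)}(X,Y)$ for every $X\subseteq S$, $Y\subseteq T$, so $\cl_v(G)$ is a valid vertex cut sparsifier of $G$.
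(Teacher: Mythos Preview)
Your proof is correct and follows essentially the same approach as the paper: both directions are argued by showing that a min-cut in one graph remains a cut in the other, translating paths via the shortcut edge $(u,w)$ in one direction and expanding through $v$ in the other, with non-essentialness used exactly to pick a $v$-avoiding min-cut in $G$. The only cosmetic difference is that you are slightly more explicit about the walk-to-path shortcutting step.
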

\begin{proof}
Let $H = \cl_v(G)$, it suffices for us to show that for all $X\subseteq S, Y\subseteq T$, 
\[
\mc_G(X, Y) = \mc_H(X, Y).
\]
We first show $\mc_G(X, Y) \le \mc_H(X, Y)$. Let $C$ be a min-cut between $X, Y$ in $H$, we show that $C$ is a valid vertex cut in $G$. Suppose not, then there exists a path $P$ in $G$ from $X$ to $Y$ that does not intersect with $C$. If $v\notin P$, then $P$ is also a path in $H$, which contradicts the fact that $C$ is a cut. If $v\in P$, let $u$ and $w$ be $v$'s predecessor and successor in $P$. According to the closure operation on $v$, there is an edge $uw$ in $H$. Therefore we again have a path from $X$ to $Y$ in $H$ not intersecting $C$, contradiction. 

Next we show $\mc_G(X, Y) \ge \mc_H(X, Y)$. Since $v$ is not essential, there exists a min-cut $C$ between $X, Y$ such that $v\notin C$. We show that $C$ is a cut in $H$. Suppose not, then there exists a path $P$ in $H$ from $X$ to $Y$ not crossing $C$. For every edge $uw$ on $P$ that is created by the closure operation, replace it by the path $u,v,w$ and denote the new walk $P'$. After truncating all the cycles of $P'$, we obtain a path from $X$ to $Y$ in $G$ not crossing $C$, contradiction. We conclude that  $\mc_G(X, Y) = \mc_H(X, Y)$.
\end{proof}

\paragraph*{Proof of Lemma~\ref{lem:close}}
\begin{lemma*}
Let $C$ be a vertex cut for $X\subseteq S, Y\subseteq T$ that is closest to $X$ ( resp. $Y$), then for all vertices $v\in C$, $(C, v)$ is saturated by $X$ (resp. $Y$).
\end{lemma*}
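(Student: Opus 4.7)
The strategy mirrors the duplication trick from the proof of Lemma~\ref{lem:ess}, but applied asymmetrically so that the extra copy only affects reachability from $X$ and not reachability to $Y$. By reversing all edges of $G$ and swapping the roles of $S$ and $T$, it is enough to prove the following: if $C$ is the closest min-cut to $X$, then $(C,v)$ is saturated by $X$ for every $v\in C$.

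I will construct an auxiliary digraph $G^{\star}$ by adding to $G$ a fresh sink-copy $v^{\star}$ of $v$ that inherits all in-neighbors of $v$ but has no out-neighbors. The saturation of $(C,v)$ from $X$ in $G$ is then equivalent to the existence of $|C|+1$ pairwise vertex-disjoint paths from $X$ to $C\cup\{v^{\star}\}$ in $G^{\star}$: identifying $v^{\star}$ with $v$ turns such a family into a saturating family in $G$, and conversely the two ``almost disjoint'' $X$-to-$v$ paths in any saturating family can be decoupled in $G^{\star}$ by rerouting one of them to $v^{\star}$.

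Assuming saturation fails, Menger's theorem applied in $G^{\star}$ yields a vertex cut $D\subseteq V(G^{\star})$ of size at most $|C|$ separating $X$ from $C\cup\{v^{\star}\}$. Let $D'$ be obtained from $D$ by replacing $v^{\star}$ (if present) by $v$, so $|D'|\le|C|$. Any $X$-to-$C$ path in $G$ also lives in $G^{\star}$ and thus meets $D$ at a vertex of $V(G)\subseteq D'$, so $D'$ is an $(X,C)$-cut in $G$. Since every $(X,C)$-cut is also an $(X,Y)$-cut (each $X$-to-$Y$ path crosses $C$), we have $\mc_G(X,C)\ge \mc_G(X,Y)=|C|$, forcing $|D'|=|C|$; hence $D'$ is itself a min-$(X,C)$-cut in $G$.

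The main obstacle, and the only place where the ``closest'' hypothesis is really needed, is to show $D'\ne C$, contradicting the uniqueness of the closest min-$(X,C)$-cut. A brief case analysis using $|D'|=|D|=|C|$ reduces the situation $D'=C$ to two subcases: either $D=C$, or $D=(C\setminus\{v\})\cup\{v^{\star}\}$. To rule these out, I will invoke the standard fact that a max-flow of value $|C|$ from $X$ to $C$ provides, for each $w\in C$, an $X$-to-$w$ path in $G$ whose only intersection with $C$ is at $w$; applied to $w=v$, this yields a path $\pi$ from $X$ to $v$ in $G$ with $\pi\cap C=\{v\}$. In the first subcase, rerouting the final edge of $\pi$ from $v$ to $v^{\star}$ gives a path in $G^{\star}$ from $X$ to $v^{\star}$ that avoids $C=D$; in the second subcase, $\pi$ itself, viewed inside $G^{\star}$, reaches $v\in C\cup\{v^{\star}\}$ while avoiding $(C\setminus\{v\})\cup\{v^{\star}\}=D$. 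Either way we contradict $D$ being an $(X,C\cup\{v^{\star}\})$-cut in $G^{\star}$, completing the proof.
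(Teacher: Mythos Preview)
Your proposal is correct and follows essentially the same approach as the paper: add a sink-only copy $v^\star$ of $v$, apply Menger to obtain a small $(X,\,C\cup\{v^\star\})$-cut, and derive a second $(X,C)$-min-cut distinct from $C$, contradicting closeness. The only difference is organizational---the paper does its case analysis on whether $v$ and $v^\star$ lie in the Menger cut, whereas you first project to $D'\subseteq V(G)$ and then split on whether $D'=C$; the redirecting-path trick and the final appeal to uniqueness are identical.
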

\begin{proof}
We prove this lemma for cuts closest to $X$, as the other case is symmetrical. Assume for the sake of contradiction there exists $v\in C$ such that $(C, v)$ is not saturated by $X$. Add a sink-only copy of $v$ into our digraph $G$ and call the new graph $G'$. By duality there must exist a cut $C'$ of size $|C|$ between $X$ and $C\cup \{v'\}$ in $G'$. We now consider a few cases.

First, note that we cannot have $v, v'$ both in $C'$, as otherwise $C'\sm \{v'\}$ would be a smaller $(X, Y)$ cut in $G$. Now if $v\in C'$ and $v'\notin C'$, then by duality there is a path from $X$ to $v$ that does not intersect $C'\sm\{v\}$, which can be redirected to $v'$, contradicting the fact that $C'$ is a cut. Similarly, if $v'\in C'$ and $v\notin C$, we have a contradiction. Therefore $v, v'\notin C'$, which implies $C$ is not the unique $(X, C)$ min-cut, contradiction. 
\end{proof}

\paragraph*{Proof of Lemma~\ref{lemma:kw-quadratic}}

\begin{lemma*}
    Let $S$ and $T$ be two vertex sets of size $2k$. Enumerate them as 
    \[  
    S = \{v_1, v_1', v_2, v_2', \cdots, v_k, v_k'\}, \\
    T = \{u_1, u_1', u_2, u_2', \cdots, u_k, u_k'\}.
    \]
For each $i, j\in [k]$, create a vertex $w_{i,j}$ and add edges from $v_i, v_i'$ to $w_{i,j}$, and from $w_{i,j}$ to $u_j, u_j'$. 
    Any vertex cut sparsifier for the resulting graph requires $\Omega(k^2)$ vertices.
    \end{lemma*}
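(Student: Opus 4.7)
The plan is to assign to each pair $(i,j) \in [k]^2$ a vertex $z_{ij} \in V(H)$ that realizes a size-$1$ min-cut in $H$ for the query $A_{ij} = \{v_i, v_i'\}$, $B_{ij} = \{u_j, u_j'\}$, and then show these $k^2$ vertices are all distinct; this immediately gives $|V(H)| \ge k^2 = \Omega(k^2)$. To see that $z_{ij}$ exists, a direct case check shows $\mc_G(A_{ij}, B_{ij}) = 1$ with $\{w_{ij}\}$ the unique size-$1$ cut in $G$ (removing any single terminal or any $w_{i\ell}$ with $\ell\ne j$ leaves an $A_{ij}$-to-$B_{ij}$ path), so by the sparsifier property $\mc_H(A_{ij}, B_{ij}) = 1$ and we take $z_{ij}$ to be any witness.

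For the distinctness, I would suppose $z := z_{ij} = z_{i'j'}$ for some $(i,j)\ne(i',j')$ and split into three cases. When $i = i'$ and $j \ne j'$ (same row), I consider the merged query $A = \{v_i, v_i'\}$, $B = \{u_j, u_j', u_{j'}, u_{j'}'\}$, where $\mc_G(A,B) = 2$, realized by $\{v_i,v_i'\}$ or $\{w_{ij}, w_{ij'}\}$ (size-$1$ cuts ruled out by inspection). Since removing $z$ from $H$ separates $A$ from both $\{u_j,u_j'\}$ and $\{u_{j'},u_{j'}'\}$, it separates $A$ from $B$, giving a cut of size $1$ in $H$ and contradicting $\mc_H(A,B) = 2$. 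The same-column case $i\ne i'$, $j = j'$ is symmetric.

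The main obstacle is the diagonal case $i\ne i'$ and $j \ne j'$. Here I consider the $2\times 2$ query $A = \{v_i, v_i', v_{i'}, v_{i'}'\}$, $B = \{u_j, u_j', u_{j'}, u_{j'}'\}$, for which $\mc_G(A,B) = 4$: the four vertex-disjoint paths $v_i \to w_{ij} \to u_j$, $v_i'\to w_{ij'}\to u_{j'}$, $v_{i'}\to w_{i'j}\to u_j'$, $v_{i'}'\to w_{i'j'}\to u_{j'}'$ give the lower bound, and $|A|=4$ the matching upper bound. In $H - z$, any $A$-to-$B$ path must be of one of the two off-diagonal types $\{v_i, v_i'\} \leadsto \{u_{j'}, u_{j'}'\}$ or $\{v_{i'}, v_{i'}'\} \leadsto \{u_j, u_j'\}$, since $z = z_{ij}$ kills one diagonal and $z = z_{i'j'}$ kills the other. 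The crucial observation is that each off-diagonal sub-query has $\mc_G = 1$ in $G$ (with $\{w_{ij'}\}$ resp.\ $\{w_{i'j}\}$ as the unique cut), so by the sparsifier property its max flow in $H$, and hence in $H - z$, is at most $1$. Thus the total max flow in $H-z$ from $A$ to $B$ is at most $2$, so $H$ admits a cut of size at most $1 + 2 = 3$, contradicting $\mc_H(A,B) = 4$.

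Combining the three cases forces all $k^2$ vertices $z_{ij}$ to be distinct in $V(H)$, yielding the desired $\Omega(k^2)$ lower bound. The technical heart is the diagonal case, where the contradiction hinges on decomposing $A$-to-$B$ paths in $H - z$ according to the two surviving off-diagonal sub-queries and invoking the sparsifier guarantee on each individually to bound the total max flow by $2$.
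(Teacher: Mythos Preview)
Your proof is correct and follows essentially the same approach as the paper: assign to each $(i,j)$ a size-$1$ min-cut vertex in $H$ and show all $k^2$ of them are distinct via the same three cases. The only cosmetic difference is in the diagonal case, where the paper observes directly that $\{z_{ij},z_{ij'},z_{i'j}\}=\{z_{ij},z_{ij'},z_{i'j},z_{i'j'}\}$ is an $(A,B)$-cut of size at most $3$ (since every $A$-to-$B$ path falls under one of the four sub-queries), whereas you reach the same bound by passing through max flow in $H-z$.
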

\begin{proof}
    Let $\lambda_G$ be the cut function of $G$, i.e,. for $A \subseteq S$ and $B \subseteq T$, let $\lambda_G(A,B)$ be the size of a minimum $(A,B)$-vertex cut in $G$.  Let $H$ be a vertex cut sparsifier of $G$ for $S$ and $T$.  For $i, j \in [k]$, let $C_{i,j}$ be some min-cut in $H$ between $A_i:=\{u_i,u_i'\}$ and $B_j:=\{v_j,v_j'\}$.  We claim that all sets $C_{i,j}$ are disjoint. 
    
    Indeed, note that $|C_{i,j}|=\lambda_G(A_i,B_j)=1$, hence it suffices to show that $C_{i,j} \neq C_{i',j'}$ for $(i,j) \neq (i',j')$.  First, assume to the contrary that $C_{i,j}=C_{i,j'}$ for some $j \neq j'$.  Let $H'=H-C_{i,j}$.  Then $H'$ has no path from $A_i$ to $B_j \cup B_{j'}$, and $|C_{i,j}|=1$, but $\lambda_G(A_i, B_j \cup B_{j'})=2$, contradicting that $H$ is a cut sparsifier.  The symmetric argument holds for $C_{i,j}=C_{i',j}$ for some $i \neq i'$.  Finally, assume $C_{i,j}=C_{i',j'}$ for some $i \neq i'$, $j \neq j'$.  Then as above $C_{i,j} \cup C_{i,j'} \cup C_{i',j}$ is a cut in $H$ between $A_i \cup A_{i'}$ and $B_j \cup B_{j'}$, contradicting $\lambda_G(A_i \cup A_i', B_j \cup B_j')=4$.
\end{proof}

\end{document}